\pgfplotsset{compat=1.13}
\newcommand{\trio}{\mathrm{T{\footnotesize ROMINO}}}
\newcommand{\strio}{\mathrm{180\mbox{-}T{\footnotesize ROMINO}}}
\newcommand{\integer}{\mathbb{Z}}
\newcommand{\ar}{\mathcal{AR}}
\newcommand{\arr}{\mathcal{AR}_{a,b}}
\newcommand{\nat}{\mathbb{N}}
\newcommand{\ad}{\operatorname{AD}}
\newcommand{\hs}[1]{\hspace*{ #1 mm}}
\newtheorem{theorem}{Theorem}
\newtheorem{lemma}[theorem]{Lemma}
\newtheorem{corollary}[theorem]{Corollary}
\newdefinition{definition}[theorem]{Definition}
\newdefinition{remark}[theorem]{Remark}
\definecolor{aLittleDarkerGreen}{RGB}{0, 150, 50}
\journal{Theoretical Computer Science}
\begin{document}

\begin{frontmatter}

\title{Hard and Easy Instances of L-Tromino Tilings\tnoteref{t1}}

\tnotetext[t1]{An extended abstract of this paper appeared in the Proceedings of WALCOM 2019 \cite{WALCOM}.}


\author[label1]{Javier T. Akagi}
\ead{akagi.tada@gmail.com}

\address[label1]{NIDTEC, Universidad Nacional de Asunci\'on, Campus Universitario, San Lorenzo C.P. 2619, Paraguay}

\author[label1]{Carlos F. Gaona}

\author[label1]{Fabricio Mendoza}

\author[label2]{Manjil P. Saikia\fnref{label3,fn3}}
\address[label2]{Fakult\"at f\"ur Mathematik, Universit\"at Wien,  Oskar-Morgenstern-Platz 1, 1090 Vienna, Austria}
\ead{manjil@gonitsora.com, manjil.saikia@univie.ac.at}
\ead[url]{https://manjilsaikia.in/}
\fntext[label4]{Supported by the Austrian Science Foundation FWF, START grant Y463 and SFB grant F50.}

\fntext[fn3]{Current Address: School of Mathematics, Cardiff University, Cardiff, CF24 4AG, UK.}

\author[label1]{Marcos Villagra\corref{cor1}\fnref{label5}}
\ead{mvillagra@pol.una.py}
\cortext[cor1]{Corresponding author}
\ead[url]{http://www.cc.pol.una.py/~mvillagra}
\fntext[label5]{Supported by Conacyt research grant PINV15-208.}

\begin{abstract}
We study tilings of regions in the square lattice with L-shaped trominoes. Deciding the existence of a tiling with L-trominoes for an arbitrary region in general is NP-complete, nonetheless, we identify restrictions to the problem where it either remains NP-complete or has a polynomial time algorithm. First, we characterize the possibility of when an Aztec rectangle and an Aztec diamond has an L-tromino tiling. Then, we study tilings of arbitrary regions where only $180^\circ$ rotations of L-trominoes are available. For this particular case we show that deciding the existence of a tiling remains NP-complete; yet, if a region does not contains certain so-called ``forbidden polyominoes'' as sub-regions, then there exists a polynomial time algorithm for deciding a tiling.
\end{abstract}

\begin{keyword}
polyomino tilings \sep tromino\sep efficient tilings\sep NP-completeness\sep Aztec rectangle\sep Aztec diamond\sep claw-free graphs
\MSC[2010] 68U05 \sep 68Q25 \sep 05B45 \sep 52C20
\end{keyword}

\end{frontmatter}

\section{Introduction}
\subsection{Background}
A packing puzzle is a solitary game where a player tries to find a way to cover a given shape using polyominoes, where a polyomino is a set of squares joined together by their edges. The computational complexity of packing puzzles was studied by Demaine and Demaine \cite{DD07} and they showed that tiling a shape or region using polyominoes is NP-complete.

In this paper we study tilings of regions in the square lattice with L-shaped trominoes (a polyomino of three cells) called an \emph{L-Tromino} or simply \emph{tromino} in this work. A cell in $\integer^2$ is a subset $[a,a+1]\times [b,b+1]$ and a region is any finite union of connected cells. At our disposal we have an infinite number of trominoes and would like to know if a given region can be covered or tiled with trominoes. 

The problem of tiling with trominoes was first studied by Conway and Lagarias \cite{CL90} who presented an algebraic necessary condition for a region in order to have a tiling. Moore and Robson \cite{MR01} showed that deciding if a region can be covered with trominoes is NP-complete. Later Horiyama \emph{et al.} \cite{HIN17} presented another proof of NP-completeness by constructing an one-one reduction which implies that counting the number of tilings with trominoes is \#P-complete. Counting the number of tilings with L-trominoes was also studied by Chin \emph{et al.} \cite{CGH07} using generating functions.

\subsection{Contributions}
In this paper we aim at identifying instances of the tiling problem with trominoes that either have efficient algorithms
or remain NP-complete. As a further generalization of the problem, we also consider regions with ``defects'' or holes, 
that is, we want to know if there is a tiling with trominoes without covering the defects. First we study the 
Aztec rectangle (and hence, also an Aztec diamond) \cite{EKLP92,MPS17} and show that any Aztec rectangle of side lengths $a,b$ can be covered with trominoes if and only if $a(b+1)+b(a+1)\equiv 0 \pmod 3$ (Theorem \ref{th-1}), which implies the existence of a polynomial time algorithm for 
finding a tiling in an Aztec rectangle. Then we show that for the cases when
$a(b+1)+b(a+1)\equiv 0 \pmod 3$ does not hold, if an Aztec Rectangle has exactly one defect, then it can be covered with trominoes
(Theorem \ref{the:ar-defect}). In general, however, deciding the tiling of an Aztec diamond with an unknown number of defects is 
NP-complete (Theorem \ref{the:az-defect-hard}).

In the second part of this paper we study a restricted case of the tiling problem where we only have $180^\circ$ rotations 
of the trominoes available. Here we show that the problem remains NP-complete (Theorem \ref{the:180tromino}) by slightly 
modifying the one-one reduction from the 1-in-3 Graph Orientation Problem of Horiyama \emph{et al.} \cite{HIN17}, whereas 
any Aztec rectangle has no tiling at all (Theorem \ref{the:Aztec-180}). Nevertheless, we show that if a region does not contain
any of the so-called ``forbidden polyominoes'' identified in this work, then that region has an efficient algorithm 
for deciding a tiling (Theorem \ref{the:forbidden}). This latter result is proved by constructing a planar dual graph 
of the region, called an intersection graph, and identifying independent sets of certain size. If the intersection graph 
has a claw, then that claw will correspond to a forbidden polyomino; if the graph is claw-free, however, we can use 
well-known efficient algorithms for finding independent sets, and hence, a tiling for the region.

In Section \ref{sec:li} we study a relation between L-Trominoes and I-Trominoes (a tromino with the shape of an I). We introduce a technique for decomposing a region in simple parts that yields an efficient algorithm for finding L-Tromino covers. This tiling technique is a modification of the proof of Theorem \ref{the:Aztec-180} for tiling the Horiyama \emph{et al.} \cite{HIN17} gadgets with I-Trominoes to tiling general regions with L-Trominoes. Finally, we close this paper with a simple lower bound on the number of tromino tilings of an Aztec diamond.

\section{Preliminaries}\label{sec:preliminaries}

In this work we will use $\integer$ to denote the set of integers and $[a,b]$ to denote the discrete interval $\{a,a+1,\dots, b\}$. A region $R$ is a finite union of cells, such that the interior is connected.  If a cell is the set of points $[a,a+1]\times [b,b+1]$, we label such a cell by $(a,b)$ which we refer to as the \emph{cell's coordinate}. Two cells are adjacent if the Manhattan distance, i.e., the $L_1$-norm, of their coordinates is 1; thus, two cells in diagonal to each other are not adjacent. 

A \emph{tromino} is a polyomino of 3 cells. In general, there are two types of trominoes, the L-tromino and the I-tromino. An L-tromino is a polyomino of 3 cells with an L shape. An I-tromino is a polyomino of 3 straight cells with the form of an I. In this work we will mostly be dealing  with L-Trominos and we will refer to them simply as trominoes; I-trominoes will appear later but we will make sure to clarify which type of tromino we are referring to.

A \emph{defect} is a cell that is ``marked'' in the sense that no tromino can be placed on top of that cell. A \emph{cover} or \emph{tiling} of a region $R$ is a set of trominoes covering all cells of $R$ that are not defects with no overlapping and each tromino is packed inside $R$. The \emph{size} of a cover is the number of tiles in it.

\begin{definition}
$\trio$ is the following problem:

\begin{tabular}{ll}
INPUT	&: a region $R$ with defects.\\
OUTPUT	&: ``yes'' if $R$ has a cover and ``no'' otherwise.
\end{tabular}
\end{definition}

Moore and Robson \cite{MR01} proved that $\trio$ is NP-complete and Horiyama \emph{et al.} \cite{HIN17} proved that $\#\trio$, the counting version of $\trio$, is \#P-complete.

In this work we will also consider tilings where only trominoes with $180^\circ$ rotations are used. More precisely, given a region $R$ we want to find a cover where all trominoes are \emph{right-oriented} as in Figure \ref{fig:180tromino}(a) or \emph{left-oriented} as in Figure \ref{fig:180tromino}(b). We will refer to trominoes where only their $180^\circ$ rotations are considered as \emph{180-trominoes}. A \emph{180-cover} of $R$ is a cover with 180-trominoes.

\begin{figure}[htb!]
\centering
\subfloat[Right-oriented]{
\begin{tikzpicture}[scale=0.8]
	\draw (1,1) rectangle (2,2);
	\draw (0,0) rectangle (1,1);
	\draw (1,0) rectangle (2,1);
\end{tikzpicture}
\quad
\begin{tikzpicture}[scale=0.8]
	\draw (1,1) rectangle (2,2);
	\draw (0,0) rectangle (1,1);
	\draw (0,1) rectangle (1,2);
\end{tikzpicture}
}
\qquad\qquad
\subfloat[Left-oriented]{
\begin{tikzpicture}[scale=0.8]
	\draw (1,1) rectangle (2,2);
	\draw (0,1) rectangle (1,2);
	\draw (1,0) rectangle (2,1);
\end{tikzpicture}
\quad
\begin{tikzpicture}[scale=0.8]
	\draw (0,0) rectangle (1,1);
	\draw (0,1) rectangle (1,2);
	\draw (1,0) rectangle (2,1);
\end{tikzpicture}
}
\caption{The $\strio$ problem either takes trominoes from the left figure or the right figure.}
\label{fig:180tromino}
\end{figure}

\begin{definition}
$\strio$ is the following problem:

\begin{tabular}{ll}
INPUT	&: a region $R$ with defects.\\
OUTPUT	&: ``yes'' if $R$ has a 180-cover and ``no'' otherwise.
\end{tabular}
\end{definition}

\section{Tiling of the Aztec Rectangle}\label{sec:Aztec}
The \emph{Aztec diamond of order $n$}, denoted $\ad(n)$, is the union of lattice squares $[a,a+1]\times [b,b+1]$, with $a,b\in \integer$, that lie completely inside the square $\{(x,y)\> | \> |x|+|y|\leq n+1\}$. Figure \ref{fig:azn} shows the first four Aztec diamonds. Tilings of the Aztec diamond with dominoes was initiated by Elkies \emph{et al.} \cite{EKLP92}, which gave impetus to a lot of current work in this area.

The concept of an Aztec diamond can be very easily extended to that of an \emph{Aztec rectangle}. We denote by $\arr$ the Aztec rectangle 
which has $a$ unit squares on the southwestern side and $b$ unit squares on the northwestern side; in the case when $a=b=n$ we get an Aztec diamond of order $n$---in view of this, all results about Aztec rectangles stated in this paper can also be stated for Aztec diamonds, and therefore, we do not mention all such results as corollaries.

When dealing with Aztec rectangles, with no loss of generality, we always assume that $a<b$.
As an example Figure \ref{fig:ar} shows $\ar_{4,10}$. Domino tilings of Aztec rectangles have been studied by various mathematicians starting with Mills \emph{et. al.} \cite{MRR83}.

In the following subsections we study tilings of the Aztec rectangle using trominoes with and without defects, and then specialize them to Aztec diamonds.
\begin{figure}[htb!]
\centering
\subfloat[$\ad(1)$]{
	\includegraphics[scale=0.022]{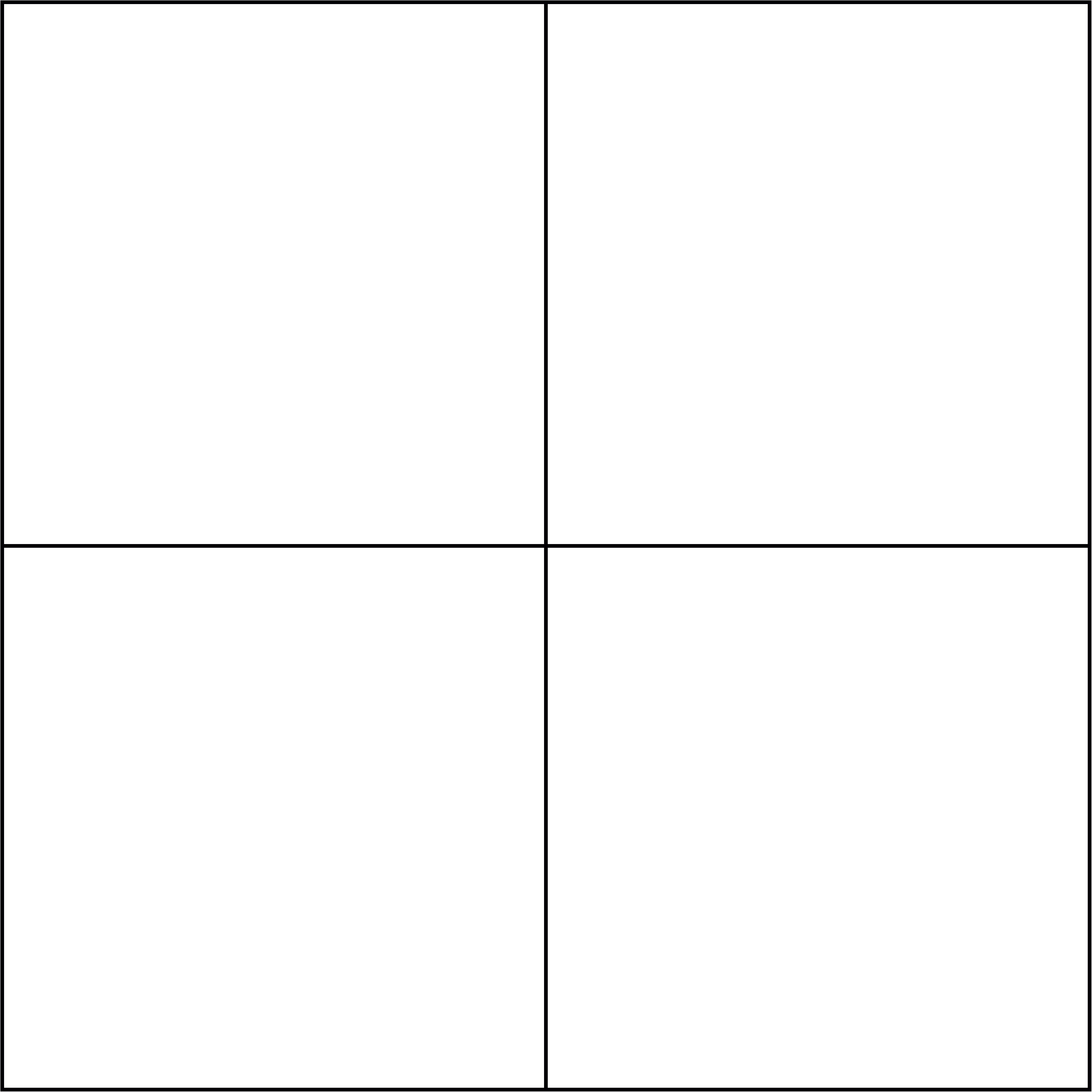}
}
\subfloat[$\ad(2)$]{
	\includegraphics[scale=0.033]{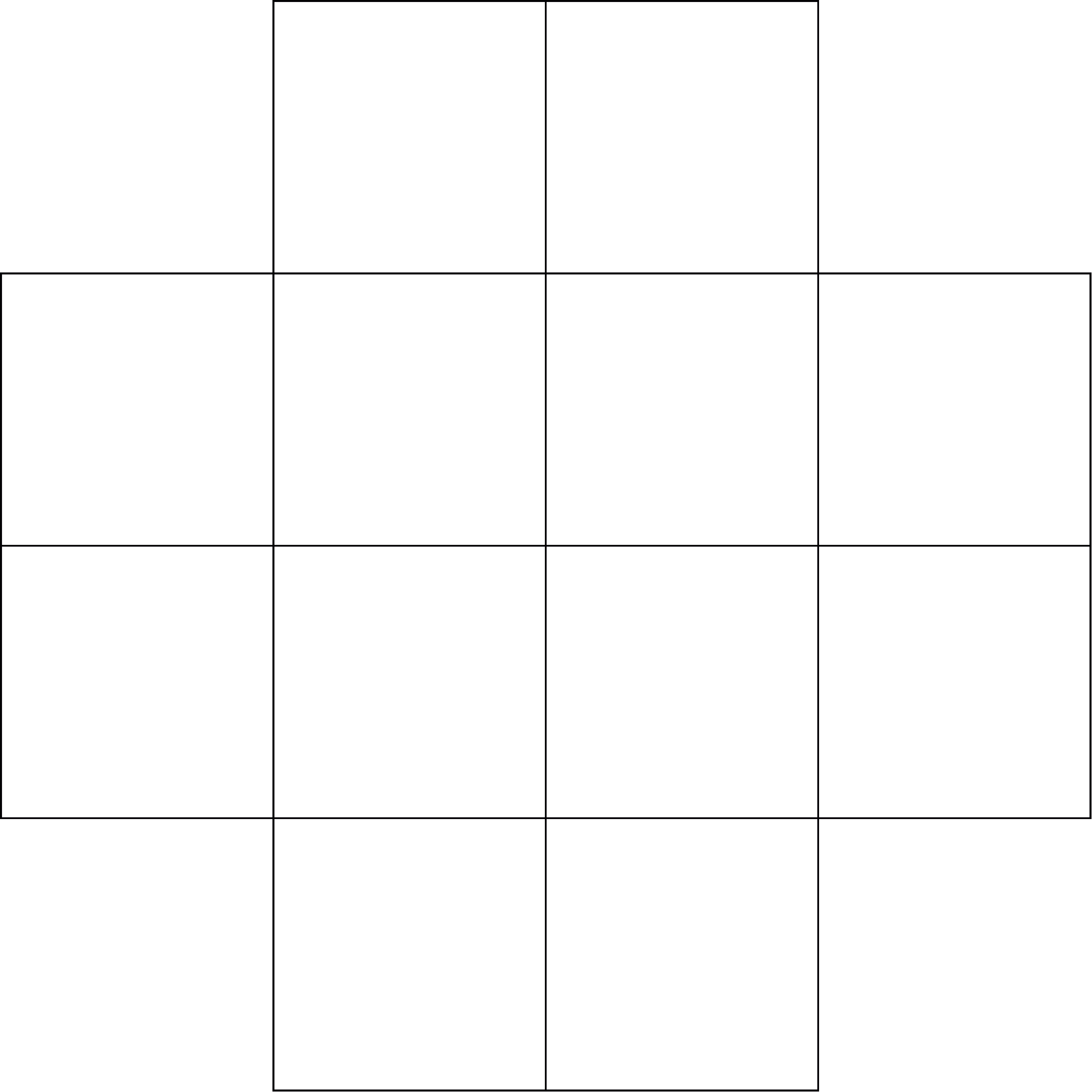}
}
\subfloat[$\ad(3)$]{
	\includegraphics[scale=0.030]{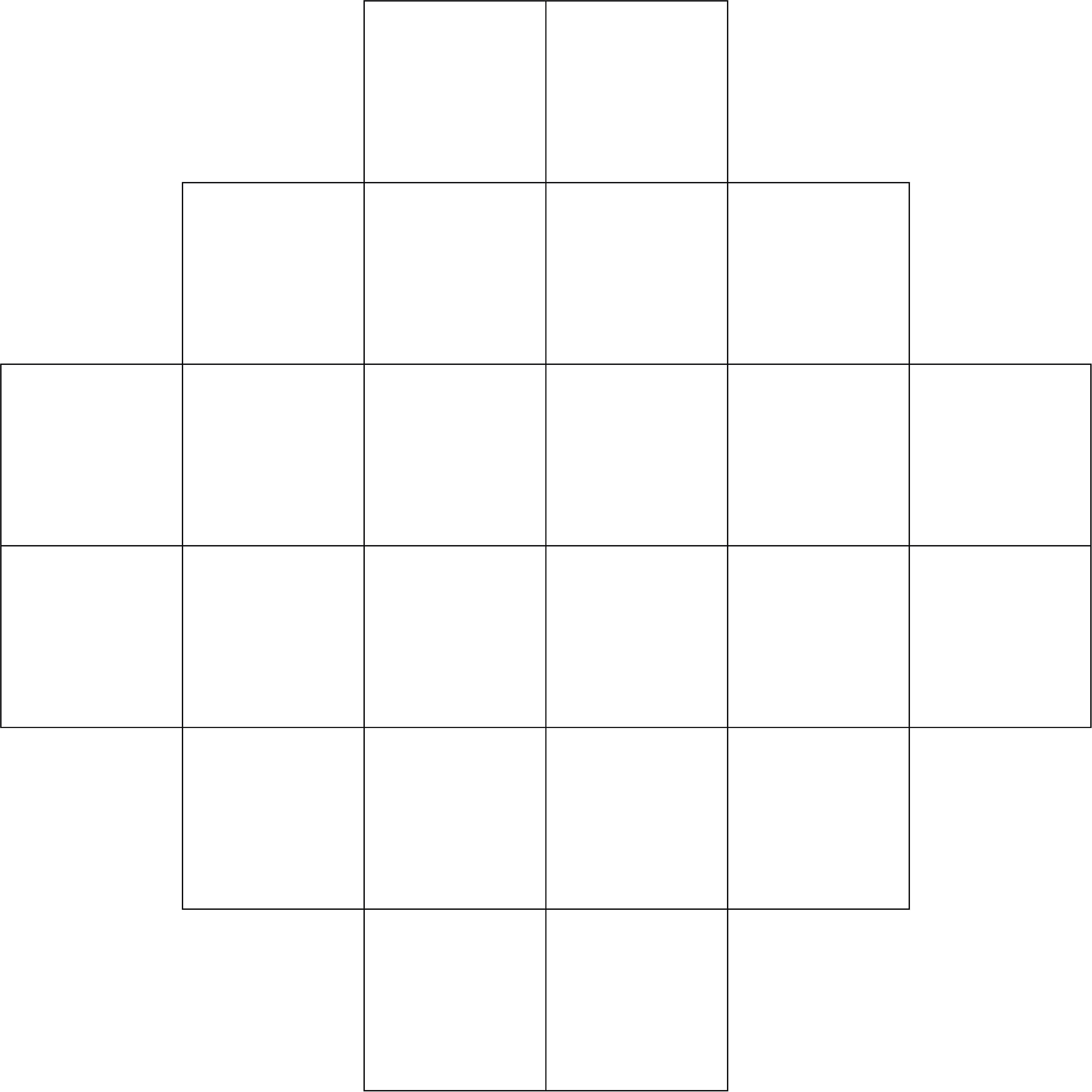}
}
\subfloat[$\ad(4)$]{
	\includegraphics[scale=0.023]{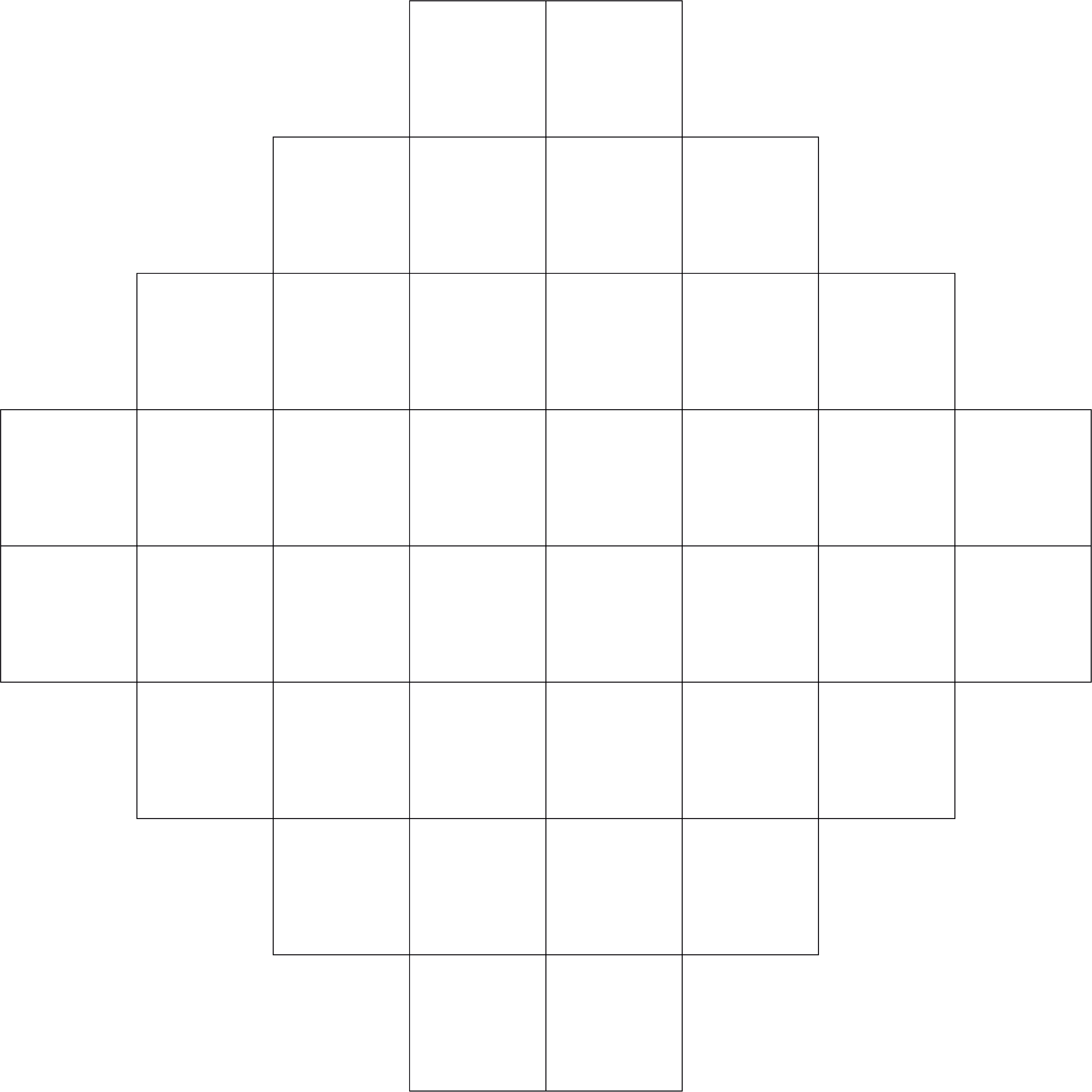}
}
\caption{Aztec diamonds of order 1, 2, 3 and 4.}
\label{fig:azn}
\end{figure}

\begin{figure}[htb!]
	\centering
	\includegraphics[scale=0.5]{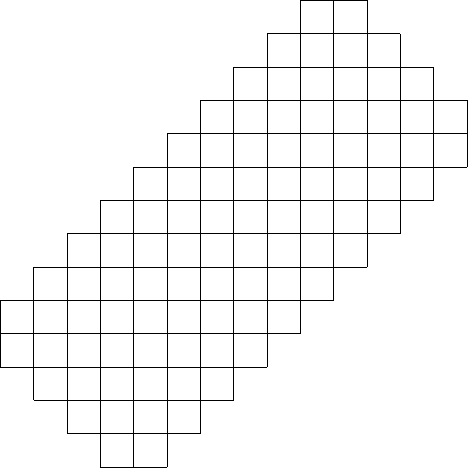}
	\caption{Aztec rectangle $\ar_{4,10}$.}
	\label{fig:ar}
\end{figure}

\subsection{Tilings with No Defects}\label{sec:az-defects}

For any Aztec rectangle $\arr$ with no defects, we can completely understand when there is a tiling. The following theorem gives a characterization.

\begin{theorem}\label{th-1}
 $\ar_{a,b}$ has a cover if and only if $a(b+1)+b(a+1)\equiv 0 \pmod 3$.
\end{theorem}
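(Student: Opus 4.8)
First I would determine the number of unit cells in $\ar_{a,b}$. An Aztec rectangle with $a$ squares on the southwestern side and $b$ on the northwestern side (with $a<b$) can be viewed as a stack of rows whose lengths grow then shrink, or equivalently as a large tilted rectangle with four staircase corners removed; a direct count gives that $\ar_{a,b}$ has $ab + a(b+1)/\!\ldots$ — more cleanly, the cell count equals $2ab+a+b$ (for $a=b=n$ this is $2n^2+2n = 2n(n+1)$, the known size of $\ad(n)$). Since each tromino covers exactly $3$ cells, a necessary condition for a cover is $2ab+a+b \equiv 0 \pmod 3$, and one checks that $2ab+a+b \equiv a(b+1)+b(a+1) \pmod 3$ (indeed $a(b+1)+b(a+1) = 2ab+a+b$ exactly). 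So the divisibility condition in the statement is precisely ``the area is a multiple of $3$,'' giving the forward (necessity) direction immediately.

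**Sufficiency via a constructive tiling.** The substantive direction is to show that whenever $3 \mid 2ab+a+b$ there is an actual tiling. The plan is to build the tiling from a small library of tileable ``building blocks'': a $2\times 3$ rectangle (which two L-trominoes tile), and a few small staircase/corner shapes that absorb the jagged boundary of the Aztec rectangle. I would first analyze the residues: the condition $3\mid 2ab+a+b$ splits the pairs $(a,b)\bmod 3$ into admissible classes (e.g. $a\equiv b\equiv 0$; or $a \equiv 1, b\equiv 1$; etc.), and for each class exhibit a decomposition. A natural strategy: peel off the four staircase corners of $\ar_{a,b}$ using small fixed gadgets, reducing to an ordinary (straight-sided) rectangle or near-rectangle, then tile that with $2\times3$ blocks together with one of the known small exceptional tilings (an L-tromino can tile, e.g., certain $3\times k$ or $5\times 9$-type regions; one can also use that any rectangle with area divisible by $3$, neither side equal to $1$, and not $3\times(\text{odd})$ in the bad way, is L-tromino-tileable — a classical fact I would invoke or reprove for the small cases that actually occur). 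Handling small base cases ($a\in\{1,2,3\}$, and a finite set of residue-corner combinations) explicitly, then induction in steps of $3$ on $a$ and on $b$ (each step adds tileable $2\times3$ strips or $3$-wide staircases), should close the argument.

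**Main obstacle.** I expect the hard part to be the boundary: the Aztec rectangle's staircase sides mean one cannot just quote ``rectangle is tileable.'' The corners and edges force a careful finite case analysis of which small gadgets are available and how they compose for each residue class of $(a,b) \bmod 3$, and the genuinely small cases (like $a=1$, or $a=2$ with $b$ in a bad residue) may need ad hoc arguments or may turn out to have no cover even when area is divisible by $3$ — if so, the theorem as stated would require $a,b$ not too tiny, so I would double-check whether the clean ``iff'' really holds for all $a<b$ or only eventually, and if the former, find the special tilings that rescue the edge cases. The rest — area count, necessity, and the bulk interior tiling by $2\times 3$ blocks — is routine.
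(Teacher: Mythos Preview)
Your necessity argument is fine and matches the paper's: $a(b+1)+b(a+1)=2ab+a+b$ is exactly the cell count, so divisibility by $3$ is forced.

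There is, however, a concrete error in your residue analysis. Working modulo $3$, the quantity $2ab+a+b$ vanishes only for $(a,b)\equiv(0,0)$ and $(a,b)\equiv(2,2)$; your listed example $(a,b)\equiv(1,1)$ gives $2+1+1\equiv 1$, and none of the mixed-residue pairs work either. So there are exactly two admissible residue classes, not a scattered ``etc.'' This matters because it is what drives the paper's two-case construction.

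On sufficiency, your primary plan --- peel off the four staircase corners and reduce to a straight rectangle --- is the part I would not expect to go through cleanly. The Aztec rectangle's boundary is a full staircase on each side, not a rectangle with four small corner nicks, so ``removing the corners'' does not leave anything close to an axis-aligned rectangle; you would be left with a smaller Aztec rectangle, which is circular. Your fallback idea (induction in steps of $3$ on $a$ and on $b$ by appending $3$-wide staircase strips) is closer to something that can be made to work, but you have not exhibited the strip pieces or verified that they dock correctly along a staircase edge.

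The paper's route is different and more structured. It introduces a \emph{stair} gadget (two $180^\circ$-related L-trominoes forming one step, concatenated into $k$-stairs of height $3k+2$) and proves two border lemmas: if $3\mid a,b$ then wrapping $\ar_{a,b}$ with $(a/3)$- and $(b/3)$-stairs yields a cover of $\ar_{a+2,b+2}$; if $3\mid a+1,b+1$ then four copies of $\ad(2)$ at the corners plus doubled stairs give a cover of $\ar_{a+4,b+4}$. Explicit base cases ($\ar_{3,6}$ and $\ar_{2,5}$) start the induction, and a small set of ``additional pieces'' lets one increase $a$ or $b$ independently so as to reach every admissible pair. This avoids any appeal to rectangle-tileability results and handles the small cases constructively, so there is no residual worry of the ``maybe the iff fails for tiny $a,b$'' kind you flag at the end.
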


\noindent As a corollary, we get the following for the Aztec diamond.

\begin{corollary}\label{the:az-tile}
$\ad(n)$ has a cover if and only if $n(n+1)\equiv 0 \pmod 3$.
\end{corollary}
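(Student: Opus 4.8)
The plan is to obtain Corollary~\ref{the:az-tile} directly from Theorem~\ref{th-1} by specialization, since an Aztec diamond of order $n$ is the Aztec rectangle $\ar_{n,n}$. I would set $a=b=n$ in the congruence of Theorem~\ref{th-1}, which gives
\[
a(b+1)+b(a+1) \;=\; n(n+1)+n(n+1) \;=\; 2n(n+1).
\]
Because $\gcd(2,3)=1$, the factor $2$ is invertible modulo $3$, so $2n(n+1)\equiv 0\pmod 3$ holds if and only if $n(n+1)\equiv 0\pmod 3$ (equivalently, $n\not\equiv 1\pmod 3$). Theorem~\ref{th-1} then tells us that $\ar_{n,n}$ has a cover precisely under this condition, which is exactly the assertion of the corollary. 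Under the hypotheses available from the excerpt this is a one-line deduction, and the only point worth writing out is the cancellation of the factor $2$ modulo $3$.

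For completeness, and as a sanity check on Theorem~\ref{th-1}, I would also sketch a self-contained argument. For necessity: the rows of $\ad(n)$ have sizes $2,4,\dots,2n,2n,\dots,4,2$, so $\ad(n)$ has $2(2+4+\cdots+2n)=2n(n+1)$ cells; since each L-tromino covers exactly $3$ cells and a cover partitions the cells of $\ad(n)$, a cover can exist only if $3\mid 2n(n+1)$, i.e.\ $3\mid n(n+1)$. For sufficiency I would induct on $n$ in steps of $3$. The base cases $\ad(2)$ ($12$ cells) and $\ad(3)$ ($24$ cells) admit explicit L-tromino tilings found by inspection, and these two residues together with the step-$3$ induction reach every $n\ge 2$ with $n\not\equiv 1\pmod 3$. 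For the inductive step I would note that the frame $\ad(n+3)\setminus\ad(n)$ is a width-$3$ annular region with $2(n+3)(n+4)-2n(n+1)=12(n+2)$ cells, slice it along its four diagonal sides into $2\times 3$ blocks (each covered by two L-trominoes) with suitable fixed L-tromino gadgets placed near the four tips, and glue the resulting tiling of the frame to an inductive tiling of $\ad(n)$.

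The main obstacle, should one take the self-contained route, is the combinatorial bookkeeping in the inductive step: one must check that cutting the width-$3$ frame into blocks never leaves an untileable notch along the staircase boundary, which I expect is handled cleanly once the correct small gadget is fixed along each of the four diagonal sides and at the four extreme cells. Relative to Theorem~\ref{th-1}, however, there is essentially no obstacle at all: the corollary follows by setting $a=b=n$ and simplifying the congruence with $\gcd(2,3)=1$, and I would present it that way.
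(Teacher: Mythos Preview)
Your proposal is correct and matches the paper's approach exactly: the corollary is obtained simply by specializing Theorem~\ref{th-1} to $a=b=n$, and your observation that $\gcd(2,3)=1$ lets you cancel the factor $2$ is the only detail needed. The additional self-contained inductive sketch you offer goes beyond what the paper provides (it gives no separate proof of the corollary at all), so you may safely omit it.
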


\begin{proof}
Put $a=b=n$ in Theorem \ref{th-1}.
\end{proof}

To prove Theorem \ref{th-1}, first we present tilings of particular cases of the Aztec rectangle in Lemmas \ref{lem-1} and \ref{lem-2}. The following lemma is trivial.

\begin{lemma}\label{squares}
 An Aztec rectangle, $\arr$ contains $a(b+1)+b(a+1)$ unit squares. Further, specializing $a=b=n$ we get that an Aztec diamond of order $n$ contains $2n(n+1)$ unit squares.
\end{lemma}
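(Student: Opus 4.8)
The plan is to count the unit squares of $\arr$ row by row. An Aztec rectangle, being a ``staircase'' shape built from an $a \times b$ arrangement, decomposes naturally into horizontal strips whose lengths increase by one at each step until reaching a maximal width, stay constant, and then decrease symmetrically. First I would fix a coordinate convention: place $\arr$ so that its rows are indexed and the number of unit squares in consecutive rows forms the sequence $1, 3, 5, \dots$ on the way up (the odd numbers, since each new row of the Aztec shape adds one cell on each side), levels off, and then mirrors back down to $1$. Concretely, for the Aztec diamond $\ad(n)$ the row lengths are $2, 4, 6, \dots, 2n, 2n, \dots, 6, 4, 2$, and summing this gives $2(2 + 4 + \cdots + 2n) = 2 \cdot 2 \cdot \frac{n(n+1)}{2} = 2n(n+1)$, which is exactly the claimed specialization and serves as a sanity check.

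For the general rectangle $\arr$ with $a < b$, I would write the row-length sequence explicitly: it increases $2, 4, \dots, 2a$ over the first $a$ rows, then there are $b - a$ rows of constant length (the ``rectangular'' middle part, of width $2a$ or $2a+1$ depending on the exact convention for $\arr$), and then it decreases symmetrically back down. Summing the three pieces — two copies of an arithmetic progression plus a constant block — yields a closed form in $a$ and $b$. The key step is then the purely algebraic verification that this sum equals $a(b+1) + b(a+1)$. I would expand $a(b+1) + b(a+1) = 2ab + a + b$ and check that the row-sum computation produces the same polynomial; alternatively, one can observe that $\arr$ is obtained by slicing the infinite diagonal strip with two bounding lines and count horizontal edges versus vertical edges, but the row-by-row sum is the most transparent route.

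The only real subtlety — and the place where I would be most careful — is pinning down the exact shape of $\arr$ from the paper's definition (``$a$ unit squares on the southwestern side and $b$ unit squares on the northwestern side''), since an off-by-one in how the middle block is described would throw off the constant term; drawing $\ar_{4,10}$ from Fig.\ref{fig:ar} and matching it against the formula $a(b+1)+b(a+1) = 4\cdot 11 + 10 \cdot 5 = 94$ would confirm the convention before committing to the general count. Once the shape is fixed, the lemma is a one-line arithmetic identity, which is why the paper calls it trivial.
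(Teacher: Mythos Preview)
The paper gives no proof at all --- it simply declares the lemma ``trivial'' --- so there is nothing to compare against line by line. Your row-by-row counting plan is a perfectly reasonable way to make the triviality explicit, and your sanity check on $\ad(n)$ is correct.

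One caution: the specific row-length profile you conjecture for $\arr$ (increase $2,4,\dots,2a$, then a block of constant width $2a$, then decrease symmetrically) does \emph{not} sum to the right thing. That profile gives
\[
2\cdot\bigl(2+4+\cdots+2a\bigr) + (b-a)\cdot 2a \;=\; 2a(a+1) + 2a(b-a) \;=\; 2ab + 2a,
\]
whereas $a(b+1)+b(a+1)=2ab+a+b$, and these differ by $a-b\neq 0$. So the middle block is not simply $b-a$ rows of width $2a$; you were right to flag this as the one place to be careful, and you would indeed need to draw $\ar_{4,10}$ (or a smaller example such as $\ar_{2,3}$, which has $17$ cells) to get the profile exactly right before summing.

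A cleaner argument --- and almost certainly the one the authors have in mind, given the suggestive form $a(b+1)+b(a+1)$ of the formula --- is to two-colour the cells of $\arr$ like a checkerboard. The cells of one colour then form an $a\times(b+1)$ array (along the diagonals) and the cells of the other colour form an $(a+1)\times b$ array; adding the two gives the formula immediately with no summation at all, and specialising $a=b=n$ yields $2n(n+1)$ for the diamond. This bypasses the off-by-one hazards of the row count entirely.
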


Define a \emph{stair}  as a polyomino made-up only of trominoes with their $180^\circ$ rotations connected as in Figure \ref{fig:stair}(a). The same stair can be rotated $90^\circ$ to obtain another stair. A \emph{$k$-stair} is a co-joined set of $k$ stairs, where a stair is joined to another stair by matching their extremes; for example, in Figure \ref{fig:stair}(b) we can see two stairs where the lowest extreme of the upper stair is matched with the upper extreme of the lower stair. This idea is easily extended to a set of $k$ stairs thus giving a $k$-stair as in Figure \ref{fig:stair}(c). A $k$-stair can also be rotated $90^\circ$ to obtain another $k$-stair. The \emph{height of a $k$-stair} is the number of steps in it. It is easy to see that the height of a $k$-stair is $3k+2$. In addition, a single tromino would be a $0$-stair.

\begin{figure}[htb!]
\centering
\subfloat[1-stair]{
	\includegraphics[scale=0.1]{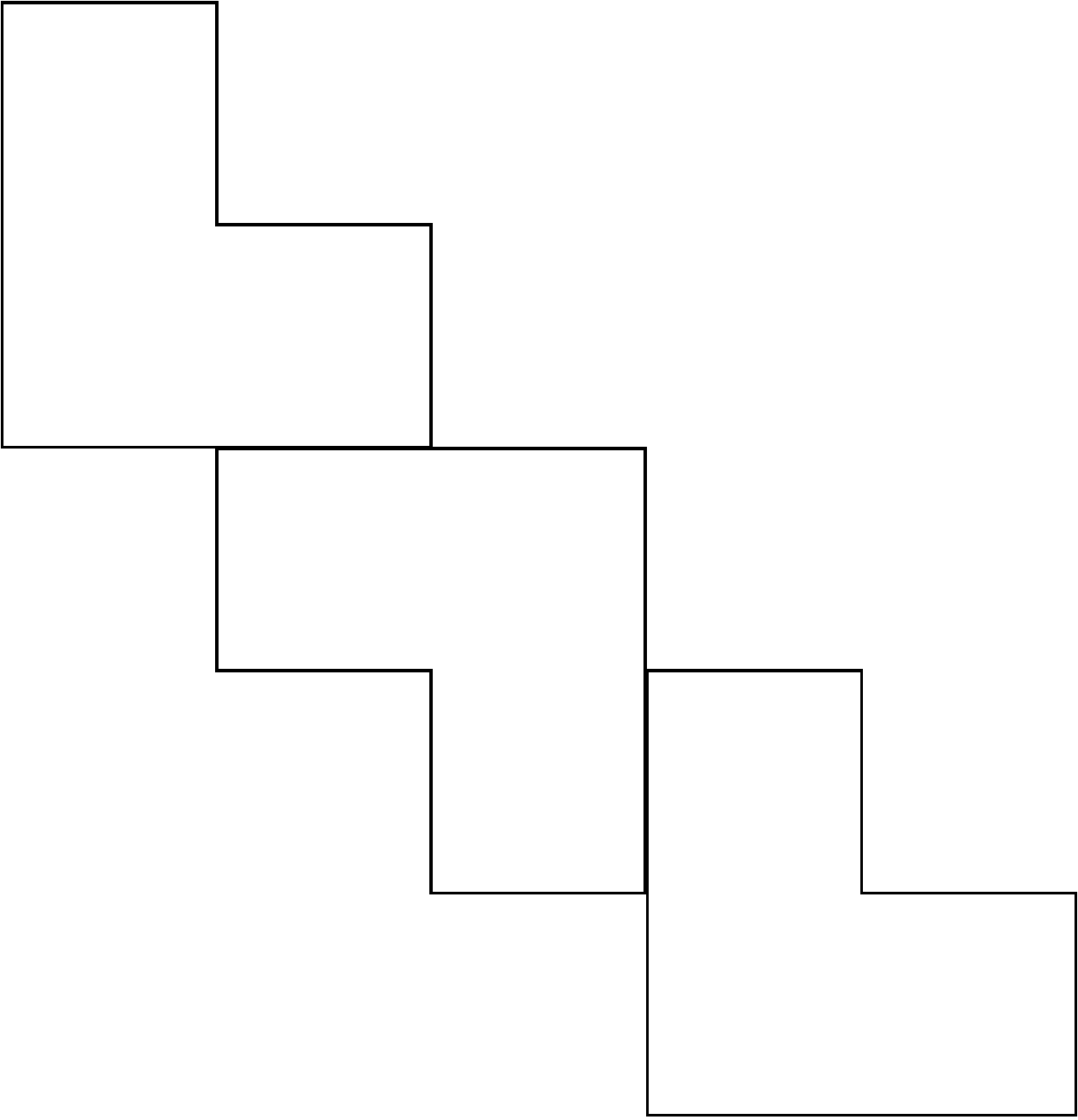}
}
\quad\quad\quad\quad
\subfloat[2-stair]{
	\includegraphics[scale=0.07]{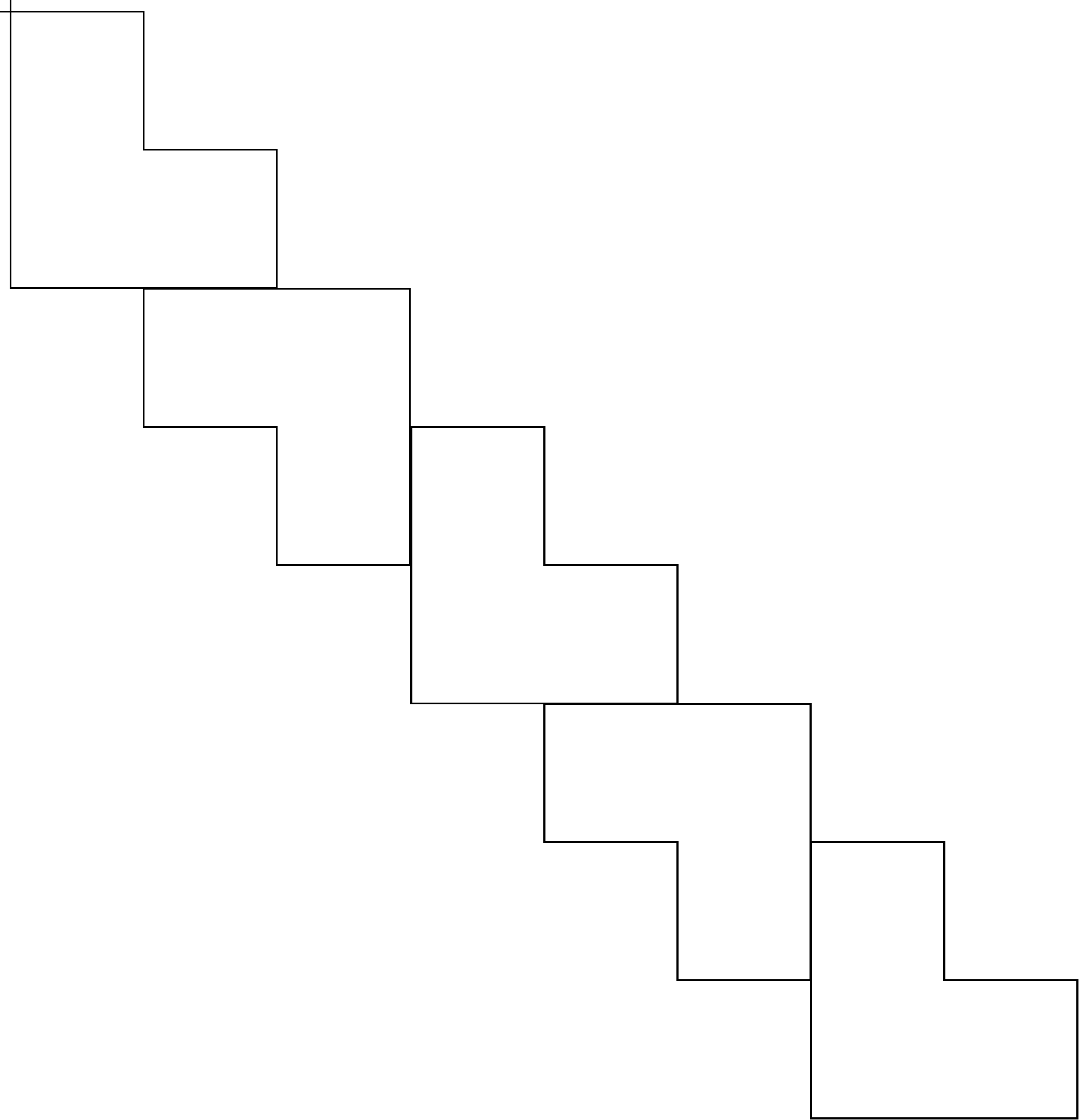}
}
\quad\quad\quad\quad
\subfloat[$k$-stair]{
	\includegraphics[scale=0.06]{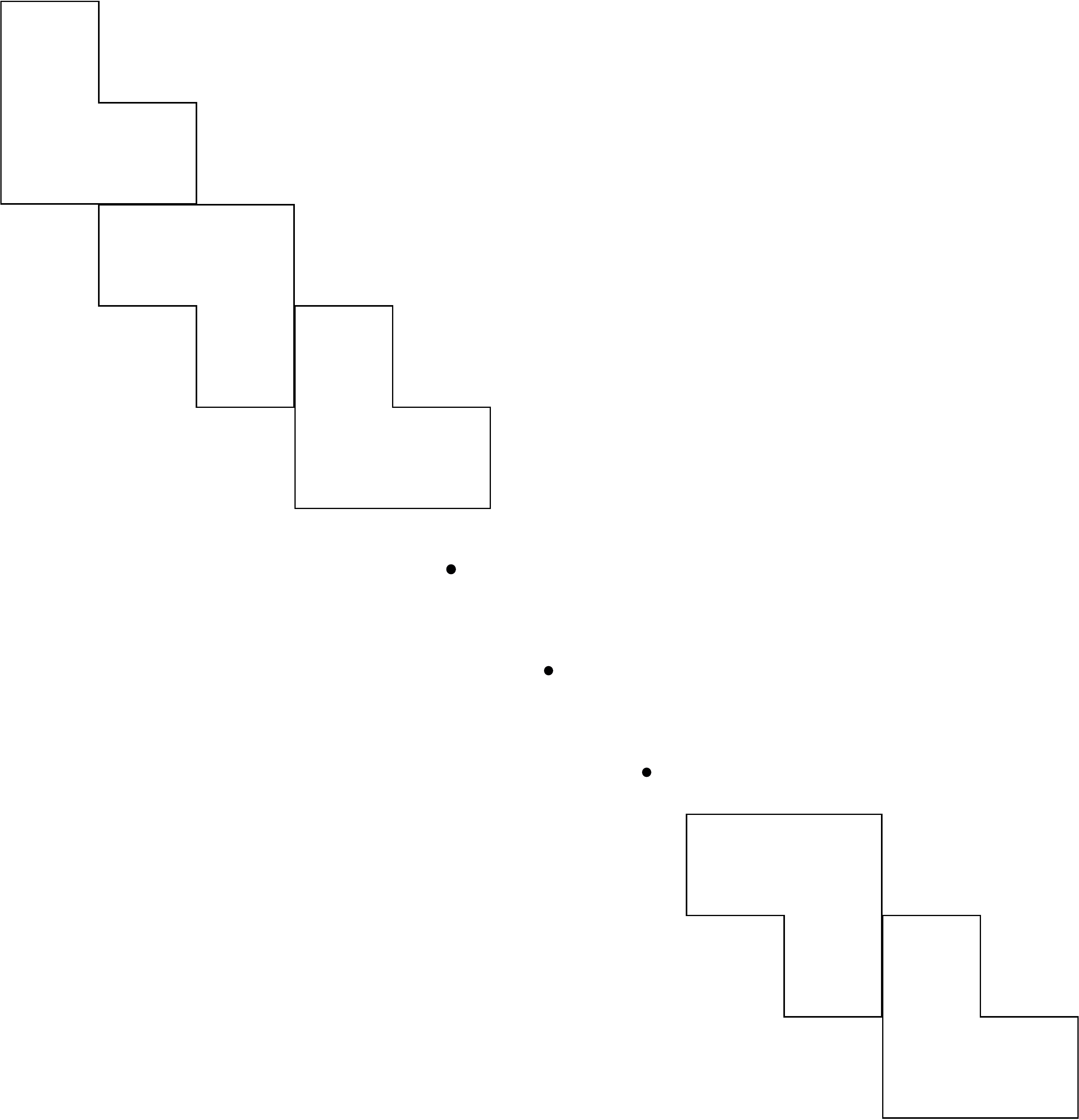}
}
\caption{A stair also includes all $90^\circ$ rotations.}
\label{fig:stair}
\end{figure}

\begin{figure}[htb!]
\centering
\subfloat[Tiling with a single stair.]{
	\includegraphics[scale=0.5]{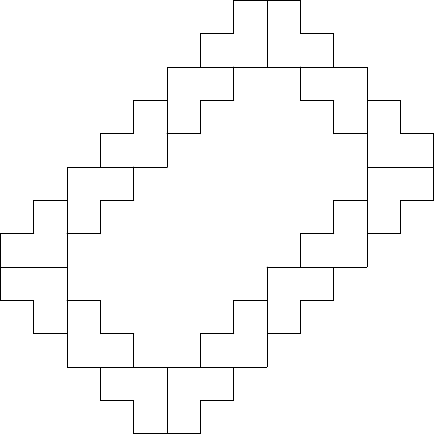}
}
\quad
\subfloat[Tiling with a double stair.]{
	\includegraphics[scale=0.45]{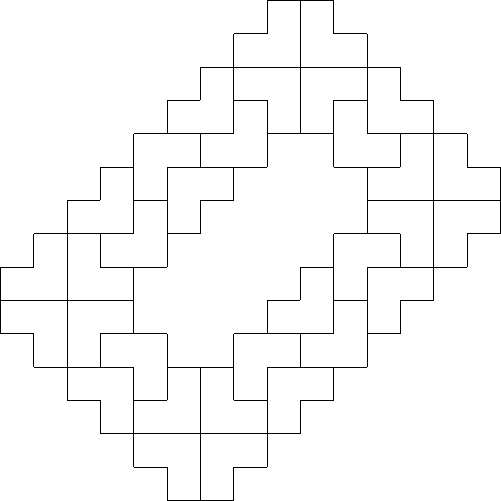}
}
\caption{Tilings of Lemmas \ref{lem-1} and \ref{lem-2}.}
\label{fig:borders}
\end{figure}

\begin{lemma}\label{lem-1}
If $3\mid a,b$ and $\arr$ has a cover, then $\ar_{a+2,b+2}$ has a cover. 
\end{lemma}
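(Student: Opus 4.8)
The plan is to show how to extend a cover of $\arr$ to a cover of $\ar_{a+2,b+2}$ by tiling the ``frame'' that sits between the two Aztec rectangles. Geometrically, when we pass from $\arr$ to $\ar_{a+2,b+2}$ we enlarge each of the four sides of the staircase boundary; since $3\mid a$ and $3\mid b$, the added border region along each side decomposes naturally into stair-shaped strips of a size that is a multiple of $3$, precisely the situation Lemma \ref{squares} and the height formula $3k+2$ for a $k$-stair were set up to handle. So the first step is to describe the border region $\ar_{a+2,b+2}\setminus \arr$ explicitly as a union of unit cells and observe that it is connected and runs around all four corners.

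Second, I would partition this border into four congruent ``L-shaped arms'' (one per side of the rectangle), being careful at the four corners so that each cell is assigned to exactly one arm. Each arm, because $a$ and $b$ are divisible by $3$, has a length along the boundary that lets it be cut into consecutive stairs; here I would invoke the tilings already exhibited in Fig.\ref{fig:borders} for Lemmas \ref{lem-1} and \ref{lem-2} — a single stair tiles exactly the kind of strip that appears along a side whose length is the right residue. Concretely, each arm of the frame is tiled by a $k$-stair (or a sequence of stairs) whose height matches $3k+2$ for the appropriate $k$ determined by $a$ or $b$; the divisibility hypothesis is exactly what guarantees the arithmetic works out with no leftover cells.

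Third, I would glue the four arm-tilings together with the assumed cover of the inner $\arr$. Since the arms are chosen to be disjoint and to cover the border exactly, and the inner region is covered by hypothesis, the union is a cover of $\ar_{a+2,b+2}$, which is what we want.

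The main obstacle I expect is entirely at the four corners: a naive ``one stair per side'' decomposition will either double-count corner cells or leave small corner pockets (of one or two cells) uncovered, and a single leftover cell cannot be part of any tromino. So the real content of the proof is a careful corner bookkeeping argument — showing that with $3\mid a$ and $3\mid b$ one can offset the four arms so that every corner cell is absorbed into exactly one stair — rather than the tiling of the straight portions, which is routine given the stair gadget. I would handle this by fixing an orientation convention (say, each arm ``owns'' the corner cell clockwise from it) and then checking that each resulting arm still has a length divisible by $3$ in the relevant direction, so that it is exactly a concatenation of stairs as in Fig.\ref{fig:stair}.
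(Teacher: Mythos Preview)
Your approach is essentially the paper's: tile the annular region $\ar_{a+2,b+2}\setminus\arr$ with one stair along each of the four sides. The paper is simply terser, asserting that a single $(a/3)$-stair along each short side and a single $(b/3)$-stair along each long side tiles the whole border (pointing to Fig.~\ref{fig:borders}(a)), and the height formula $3k+2$ is exactly what makes an $(a/3)$-stair span the full side of length $a+2$.

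Two small corrections to your write-up. First, the four arms are not ``congruent'': two have length governed by $a$ and two by $b$, and you later acknowledge this when you say $k$ is ``determined by $a$ or $b$''. Second, the quantity you want to control at the end is not that ``each arm has length divisible by $3$'' (indeed $a+2\equiv 2\pmod 3$ here) but that each arm is exactly a $k$-stair of height $3k+2$ for $k=a/3$ or $k=b/3$; the divisibility of the \emph{cell count} by $3$ then comes for free because a $k$-stair is a union of trominoes. Your corner worry is legitimate but over-anticipated: the point of the stair gadget is that its two ends are shaped so that an $(a/3)$-stair and a $(b/3)$-stair meet flush at a corner of the Aztec rectangle with no overlap or gap, so no ad hoc ownership convention is needed --- one stair per side already partitions the border.
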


\begin{proof}
If $a,b$ are multiples of 3, then an $a/3$-stair and an $b/3$-stair can be used to tile around $\arr$ along the shorter and longer sides 
respectively, using the pattern of Figure \ref{fig:borders}(a). This tiling increments the order of the Aztec rectangle
by 2, thus obtaining a tiling for $\ar_{a+2,b+2}$.
\end{proof}

 \begin{lemma}\label{lem-2}
  If $3\mid a+1,b+1$ and $\arr$ has a cover, then $\ar_{a+4,b+4}$ has a cover.
 \end{lemma}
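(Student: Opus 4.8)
The plan is to reduce the lemma to tiling one frame region. Place a copy of $\arr$ concentrically inside $\ar_{a+4,b+4}$ and let $F:=\ar_{a+4,b+4}\setminus\arr$ be the set of cells of $\ar_{a+4,b+4}$ not used by that copy. Since we are handed a cover of $\arr$, it suffices to cover $F$; gluing the two covers produces a cover of $\ar_{a+4,b+4}$. Geometrically $F$ is a staircase-shaped ``annulus'' that is two layers thick, a layer being the amount of material that raises the order of an Aztec rectangle by $2$ as in Lemma~\ref{lem-1}. Note that neither single layer of $F$ can be tiled by that border construction — the outer one would need $3\mid a+2$ and the inner one $3\mid a$, and here $a\equiv 2\pmod 3$ — so the two layers must be handled together, which is what forces the use of thicker \emph{double stairs}.

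First I would record that $3\mid a+1$ and $3\mid b+1$ mean $a\equiv b\equiv 2\pmod 3$, and that a cell count in the spirit of Lemma~\ref{squares} gives $|F|=8(a+b+5)$, which is divisible by $3$ because $a+b+2\equiv 0\pmod 3$; so there is no immediate parity obstruction. Then I would write $F$ out in coordinates and cut it into strips, one running along each side of the boundary of the Aztec rectangle, the strips meeting in small corner regions; two of the strips (those along the sides with $a$ steps) have length governed by $a$ and the other two (along the sides with $b$ steps) by $b$.

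Each strip is tiled as in the pattern of Fig.~\ref{fig:borders}(b): near one end of the strip one places a \emph{double stair} (a $2$-stair), which is precisely the piece that absorbs the mismatch caused by $a\equiv 2\pmod 3$ (respectively $b\equiv 2\pmod 3$), and the rest of the strip — which now has a number of steps divisible by $3$ — is filled by a single $k$-stair of the appropriate size exactly as in Lemma~\ref{lem-1}, rotated by $90^\circ$ according to the side it lies on; the corner regions get absorbed into the adjacent double stairs. One then checks that these pieces partition $F$ with no gaps or overlaps, and that the construction still makes sense for the smallest admissible parameters — the prototype being the frame between $\ar_{2,5}$ and $\ar_{6,9}$.

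The step-count bookkeeping is routine; the real obstacle is the geometry at the four corners — verifying that a double stair together with a $k$-stair genuinely tile each strip, and that adjacent strips interlock correctly for every admissible $a$ and $b$ (in particular that the double stairs chosen for neighbouring sides do not collide). Once the decomposition of $F$ into strips and corners is nailed down, this reduces to a finite check on the corner configurations, which is exactly what Fig.~\ref{fig:borders}(b) is meant to make evident.
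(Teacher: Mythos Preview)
Your overall plan---embed $\arr$ concentrically in $\ar_{a+4,b+4}$ and tile the two-layer annulus $F$---is exactly the paper's, and your cell count $|F|=8(a+b+5)$ is correct. The gap is in how the strips are actually filled. In the paper's terminology a $2$-stair (Fig.~\ref{fig:stair}(b)) is still a \emph{single-layer} staircase, just of height $8$: it is two $1$-stairs joined end-to-end, not a thicker band. Likewise ``a single $k$-stair of the appropriate size exactly as in Lemma~\ref{lem-1}'' tiles only a one-layer border. So your recipe ``one $2$-stair near an end plus one $k$-stair for the rest'' cannot cover a side of $F$, which is two layers thick along its entire length. You have correctly diagnosed that neither layer alone admits the Lemma~\ref{lem-1} construction, but the pieces you name do not add up to the right shape, and ``the corner regions get absorbed into the adjacent double stairs'' is left entirely unspecified.

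The paper's construction handles the corners and the thickness together: at each of the four corners of $\arr$ one places a tiled copy of $\ad(2)$, and along each side one stacks \emph{two} $(a-2)/3$-stairs (respectively $(b-2)/3$-stairs) one on top of the other. The point is that $a\equiv b\equiv 2\pmod 3$ makes $(a-2)/3$ and $(b-2)/3$ integers, and the four $\ad(2)$ blocks absorb exactly the cells at each end that would otherwise spoil the divisibility. This corner idea is what is missing from your sketch; with it, the ``finite check on the corner configurations'' you anticipate collapses to the single explicit pattern of Fig.~\ref{fig:borders}(b).
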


\begin{proof}
To find a tiling for $\ar_{a+4,b+4}$ we use four copies of $\ad(2)$ added to the four corners of $\arr$. Then, to complete the tiling, we use two 
$(a-2)/3$ and $(b-2)/3$-stairs one on top of each other along the shorter and longer sides respectively, to complete the border. The entire construction follows the pattern of Figure \ref{fig:borders}(b). This tiling increments the order of the Aztec rectangle by 4, thus obtaining a tiling for $\ar_{a+4,b+4}$.
\end{proof}

Now, let us prove Theorem \ref{th-1}.

\begin{proof}[Proof of Theorem \ref{th-1}]
The values for which $a(b+1)+b(a+1)\equiv 0 \pmod 3$ holds are $a,b=3k$ and $a,b=3k-1$ for some $k\in \mathbb{Z}$. 

Thus, the statement is equivalent to saying that for all positive integers $k$ there is a tiling of $\arr$ where $3\mid a,b$  or $3\mid a+1,b+1$ and that there are no tilings for $\arr$ when $3\mid a+2,b+2$.

 We show the second part now, which is easy since if we have $\arr$ with $a,b$ of the form $3k+2$, then the number of lattice 
 squares inside $\arr$ is not divisible by $3$ and hence we cannot tile this region with trominoes.

We come to the first cases now. Using Lemmas \ref{lem-1} and \ref{lem-2}, this part is clear if we can show the base induction case to be true.

The base case of Lemma \ref{lem-1} is shown in Figure \ref{fig:base-1}(a), which is $\ar_{3,6}$. Once we have a tiling of $\ar_{3,6}$, we 
can use Lemma \ref{lem-1} to create a tiling of an Aztec rectangle whose sides are increased by $2$. We can also increase 
$\ar_{3,6}$ by using the additional pieces shown in Figure \ref{fig:base-1}(b,c) using them in combinations with any case of Aztec rectangle 
satisfying the properties of Lemma \ref{lem-1} to increase either the longer or the shorter sides, and if all three additional pieces are 
used then we can increase both sides of $\arr$.

\begin{figure}[htb!]
\centering
\subfloat[Base induction case.]{
	\includegraphics[scale=0.5]{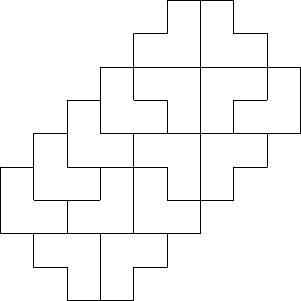}
}
\quad
\subfloat[Length additional pieces.]{
	\includegraphics[scale=0.45]{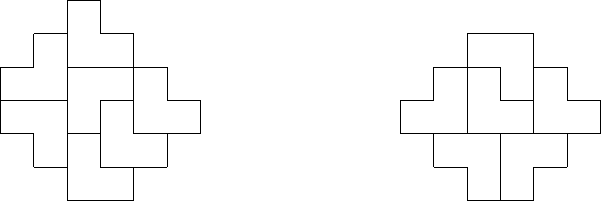}
}
\quad
\subfloat[Breadth additional piece.]{
	\includegraphics[scale=0.45]{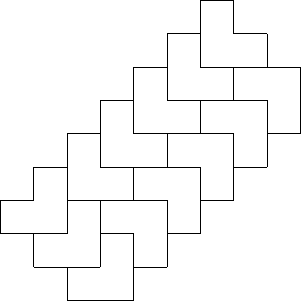}
}
\caption{Base case of Lemma \ref{lem-1}.}
\label{fig:base-1}
\end{figure}

Similarly, the base case of Lemma \ref{lem-2} is shown in Figure \ref{fig:base-2}(a), which is $\ar_{2,5}$. Once we have a tiling of $\ar_{2,5}$, we 
can use Lemma \ref{lem-2} to create a tiling of an Aztec rectangle whose sides are increased by $4$. We can also increase 
$\arr$ by using the additional pieces shown in Figure \ref{fig:base-2}(b,c,d) using them in combinations with any case of Aztec rectangle 
satisfying the properties of Lemma \ref{lem-2} to increase either the longer or the shorter sides, and if all three additional pieces are 
used then we can increase both sides of $\arr$.
\end{proof}

\begin{figure}[htb!]
\centering
\subfloat[Base induction case.]{
	\includegraphics[scale=0.45]{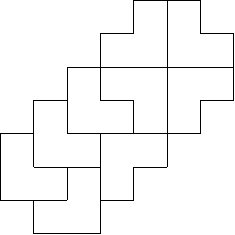}
}
\quad
\subfloat[Length additional piece.]{
	\includegraphics[scale=0.45]{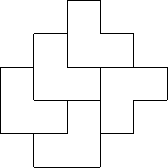}
}
\quad
\subfloat[Length additional piece.]{
	\includegraphics[scale=0.45]{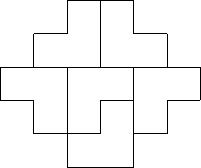}
}
\quad
\subfloat[Breadth additional piece.]{
	\includegraphics[scale=0.45]{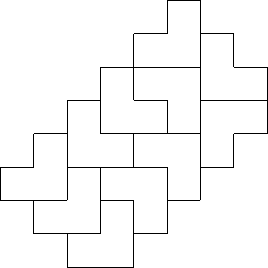}
}
\caption{Base case of Lemma \ref{lem-2}.}
\label{fig:base-2}
\end{figure}

An $O(b^2)$ time algorithm is immediately obtained from the proof of Theorem \ref{th-1}.

\begin{theorem}\label{the:algorithm}
A tromino cover for $\arr$ can be found in time $O(b^2)$. 
\end{theorem}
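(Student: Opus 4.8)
The plan is to observe that the proof of Theorem~\ref{th-1} is entirely constructive, and then to bound the cost of carrying that construction out by the number of cells of the region being tiled.

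First I would have the algorithm compute $a(b+1)+b(a+1)\bmod 3$ in $O(1)$ arithmetic operations; by Theorem~\ref{th-1} this decides whether $\arr$ has a cover, and when it does the residue determines which of the two regimes we are in, $3\mid a,b$ or $3\mid a+1,b+1$ (recall $a\le b$). Next, the algorithm assembles a cover by replaying the induction of Theorem~\ref{th-1}. In the regime $3\mid a,b$ it starts from the explicit tiling of $\ar_{3,6}$ in Fig.~\ref{fig:base-1}(a) and repeatedly applies the border step of Lemma~\ref{lem-1} --- wrapping a stair around the short side and a stair around the long side as in Fig.~\ref{fig:base-1}, which enlarges both side lengths by~$2$ --- together with the finitely many ``additional pieces'' of Fig.~\ref{fig:base-1}(b,c) that lengthen the long side or the short side on their own, until exactly $\arr$ is reached. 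The regime $3\mid a+1,b+1$ is handled symmetrically from the base $\ar_{2,5}$ of Fig.~\ref{fig:base-2}(a), using the $\ad(2)$ corner blocks and double stairs of Lemma~\ref{lem-2} and the additional pieces of Fig.~\ref{fig:base-2}(b,c,d); the finitely many small rectangles, and the diamond case $a=b$ (Corollary~\ref{the:az-tile}), are handled by a constant-size lookup table of base tilings combined with the same additional pieces. Each stair, border, corner block, or additional piece is emitted as a list of trominoes --- each tromino described by an anchor cell and an orientation --- in time proportional to the number of cells it occupies.

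For the running time, Lemma~\ref{squares} gives that $\arr$ has exactly $N:=a(b+1)+b(a+1)=2ab+a+b$ cells, so every cover consists of exactly $N/3$ trominoes, and since $a\le b$ we have $N=O(b^2)$. The decomposition above uses $O(1)$ base and corner blocks and $O(a+b)$ stair, border and additional pieces, the total cell count over all of them being exactly $N$; hence the algorithm outputs the whole cover in $O(N)=O(b^2)$ time (in the word-RAM model), which is linear in the output size. Since any algorithm producing a cover must write down all $\Theta(ab)$ of its trominoes, this is optimal even in the sharper form $O(ab)$; in particular it is $\Theta(b^2)$ for Aztec diamonds.

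I expect the only real work to be the finite bookkeeping inside the assembly step: verifying that the two base rectangles, the stair borders of Lemmas~\ref{lem-1} and~\ref{lem-2} (which change both side lengths at once), and the finite catalogue of additional pieces together reach \emph{every} admissible pair $(a,b)$ with $a\le b$ --- including the boundary values $a\in\{2,3,5,6\}$ and the diamond case $a=b$ --- and that each piece can be placed at its prescribed position without overlapping its neighbours. This is a finite case check once the infinite family is absorbed into the induction of Theorem~\ref{th-1}, so it is tedious rather than deep, and the time bound then follows immediately from counting cells.
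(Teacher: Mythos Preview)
Your proposal is correct and follows essentially the same approach as the paper: turn the constructive induction behind Theorem~\ref{th-1} (the base rectangles, the stair borders of Lemmas~\ref{lem-1}--\ref{lem-2}, and the additional pieces) into an algorithm, and bound its work by $O(b^2)$. Your cell-counting argument for the time bound is slightly cleaner than the paper's per-level $O(b)$ accounting, but the algorithm and the conclusion are the same.
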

\begin{proof}
Given $a,b\in \nat$, the following procedure called $\texttt{ARTiling}(a,b)$ finds a tiling for $\arr$.
\begin{enumerate}
	\item If $a=2,b=5$ or $a=3,b=6$, return the tiling of Figure \ref{fig:base-2}(a) or Figure \ref{fig:base-1}(b), respectively.
	\item If $a(b+1)+b(a+1)\not\equiv 0 \pmod 3$, then return ``there is no tiling''.\label{st:no-tiling}
	\item If $a,b$ are multiples of 3, then\label{st:multiple}
		\begin{enumerate}
		\item $R\gets \texttt{ARTiling}(a-2,b-2)$;
		\item fill the borders of $R$ using the pattern of Figure \ref{fig:base-1}.\label{st:border4}
		\end{enumerate}
	\item If $a+1,b+1$ is a multiple of 3, then
		\begin{enumerate}
		\item $R\gets \texttt{ARTiling}(a-4,b-4)$;
		\item fill the borders of $R$ using the pattern of Figure \ref{fig:base-2}.\label{st:border2}
		\end{enumerate}
	\item Return $R$.
\end{enumerate}
Steps \ref{st:no-tiling} and \ref{st:multiple} are done in time $O(\log b)$ and steps \ref{st:border4} and \ref{st:border2} can be done in time $O(b)$, thus, giving a total time complexity of $O(b^2)$. The correctness follows from Lemmas \ref{lem-1} and \ref{lem-2}.
\end{proof}

\begin{corollary}
A tromino cover for $\ad(n)$ can be found in time $O(n^2)$. 
\end{corollary}

\subsection{Tiling with Defects}

From Theorem \ref{th-1} we know that for any positive integers $a,b$, the Aztec rectangles with no defects $\arr$ such that $3$ divides 
$a,b$ or $3$ divides $a+1,b+1$ have a cover but if $3$ divides $a+2,b+2$, then $\arr$ does not have a tiling.  We show that if such an Aztec rectangle has exactly one defect, then it can be covered with trominoes.

\begin{theorem}\label{the:ar-defect}
 There exists a cover for $\ar_{a,b}$ with $a,b$ of the form $3k-2$ having one defect.
\end{theorem}

\begin{proof}
To tile $\arr$ with one defect we use a construct which we call a \emph{fringe} appearing in Figure \ref{fig:az-defect}(a). It is easy to check that if a fringe has exactly one defect, then it can be covered with trominoes.

To construct a tiling for $\arr$ with one defect we place a fringe in a way that includes the defect and the left and right ends of the fringe touches the boundaries of the Aztec rectangle as in Figure \ref{fig:az-defect}(b). Then we use the tiling pattern of Figure \ref{fig:az-defect}(b) where we put stairs above and below the fringe.
\end{proof}

\begin{figure}[htb!]
\centering
\subfloat[Fringe]{
	\includegraphics[scale=0.1]{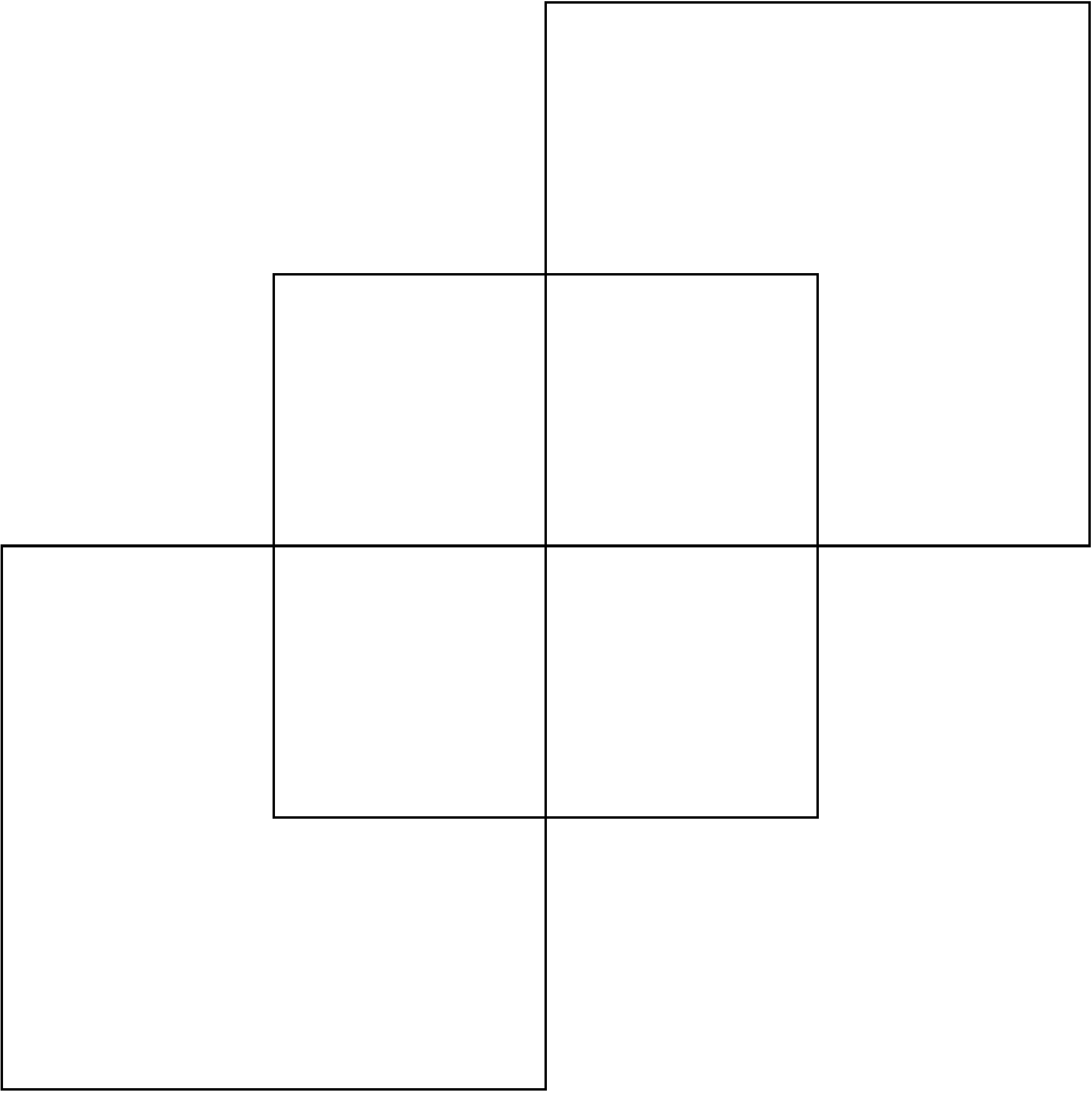}
}
\quad\quad\quad\quad
\subfloat[Tiling pattern]{
	\includegraphics[scale=0.5]{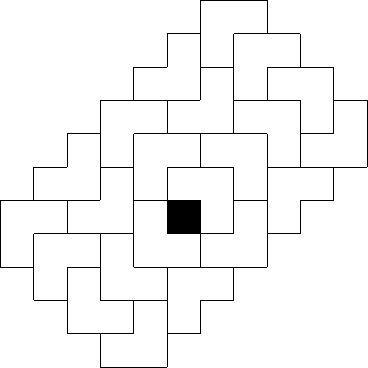}
}
\caption{Tiling of $\arr$ with one defect. A \emph{fringe} can be composed of any number of order 1 Aztec diamonds $\ad(1)$ joined by their upper right and lower left cells. A \emph{reversed fringe} is obtained by joining order 1 Aztec diamonds by their upper left and lower right cells.}
\label{fig:az-defect}
\end{figure}

\begin{corollary}\label{the:az-defect}
For any positive integer $k$, the Aztec diamond $\ad(3k-2)$ with one defect has a cover.
\end{corollary}

We can consider many different classes of defects, and it can be observed that some of these classes have easy tilings. As an example, 
we have in Figure \ref{fig:four-d}(a) an Aztec rectangle with four defects on its corners. A tiling of this region is shown in 
Figure \ref{fig:four-d}(b). In the combinatorics literature, tilings of regions with defects of several kinds for Aztec rectangle have been studied (see 
\cite{MPS17} for the most general class of boundary defects).

\begin{figure}[htb!]
\centering
\subfloat[$\arr$ with four defects]{
	\includegraphics[scale=0.5]{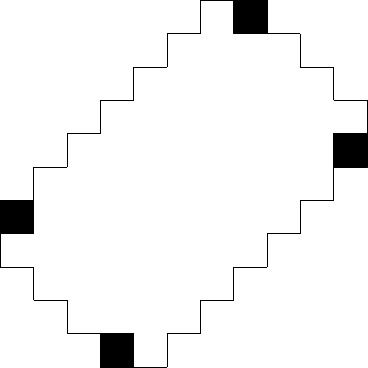}
}
\quad\quad\quad\quad
\subfloat[Tiling pattern]{
	\includegraphics[scale=0.5]{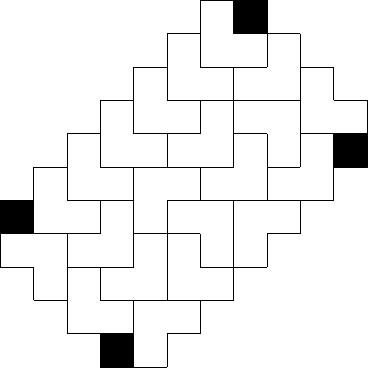}
}
\caption{Tiling of $\arr$ with four defects.}
\label{fig:four-d}
\end{figure}

\begin{remark}
Similar defects can be studied for Aztec diamonds as well. In fact, we can delete all cells in a fringe and obtain a tiling.
\end{remark}

Even though the proof of Theorem \ref{the:ar-defect} gives an $O(b^2)$ time algorithm for finding a cover for $\arr$ with one defect, in general, however, it is computationally hard to determine if $\arr$ with an unknown number of defects has a cover.

\begin{theorem}\label{the:az-defect-hard}
It is NP-complete to decide whether $\ad(n)$ with an unbounded number of defects has a cover.
\end{theorem}
\begin{proof}[Proof Sketch]
The reduction is from tiling an arbitrary region $R^\prime$ with defects. The idea is to embed $R^\prime$ into $\ad(n)$ for some sufficiently large $n$ and insert defects in $\ad(n)$ in a way that surrounds $R^\prime$ (see Figure \ref{fig:embed}). 
\end{proof}

\begin{figure}
    \centering
    \includegraphics[scale=1.2]{./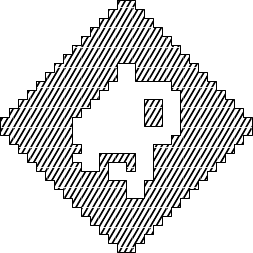}
    \caption{Tiling an Aztec Rectangle with arbitrary defects.}
    \label{fig:embed}
\end{figure}

\begin{corollary}
It is NP-complete to decide if $\arr$ with an unbounded number of defects has a cover.
\end{corollary}

\section{Tiling with 180-Trominoes}\label{sec:180tromino}

In this section we study tilings of arbitrary regions using only 180-trominoes. With no loss of generality, we will only consider right-oriented 180-trominoes.

\begin{figure}[htb!]
	\centering
	\includegraphics[scale=1.4]{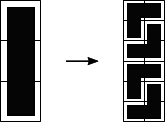}
	\caption{I-Tromino to L-Tromino transformation using 180-Trominoes.}
	\label{fig:itromino-180ltromino}
\end{figure}

\subsection{Hardness}\label{sub:hardness}
It is easy to see that even when restricted to 180-trominoes, deciding the existence of a tiling of an arbitrary region is still hard.
\begin{theorem}\label{the:180tromino}
$\strio$ is NP-complete.
\end{theorem}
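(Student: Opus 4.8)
\textbf{Proof plan for Theorem \ref{the:180tromino}.}
The plan is to adapt the NP-completeness reduction of Horiyama \emph{et al.}~\cite{HIN17} for $\trio$, which goes from the 1-in-3 Graph Orientation Problem. Recall that in that reduction each vertex and each edge of the input graph is replaced by a gadget made of cells, and the gadgets are glued together so that a tiling of the whole region corresponds exactly to a ``1-in-3'' orientation: each edge gadget forces a choice of which of its endpoints it ``points to,'' and each vertex gadget can be completed if and only if exactly one incident edge points to it. First I would inspect the vertex gadget, the edge gadget, and the wire/connector pieces used in \cite{HIN17} and check whether the trominoes appearing in any valid tiling of each gadget are all $180^\circ$ rotations of a single chirality; wherever a gadget (in some of its admissible tilings) uses the ``wrong'' rotations, I would redesign that gadget so that every admissible tiling uses only right-oriented 180-trominoes, while preserving the combinatorial behaviour (the set of local completions and the forced orientation semantics) that makes the reduction work.

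The key steps, in order, are: (1) state membership in NP --- a proposed 180-cover is a polynomial-size certificate verifiable in polynomial time, exactly as for $\trio$; (2) recall the 1-in-3 Graph Orientation Problem and its NP-completeness, and recall the global structure of the Horiyama \emph{et al.} reduction (region $R_G$ built from vertex, edge, and connector gadgets, with the correspondence between covers of $R_G$ and valid 1-in-3 orientations of $G$); (3) go gadget by gadget and modify each one, if necessary, so that its only admissible tilings use 180-trominoes of a fixed orientation, drawing the new gadgets explicitly and enumerating their admissible local tilings; (4) argue that the modified region $R_G'$ is still polynomial-time constructible from $G$, and that $R_G'$ has a 180-cover if and only if $G$ has a 1-in-3 orientation, by the same case analysis on gadget boundaries as in \cite{HIN17}; (5) conclude NP-hardness, hence NP-completeness together with step (1).

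The main obstacle I expect is step (3): a gadget that works for arbitrarily-oriented trominoes need not have a 180-tromino analogue of the same size and shape, because restricting to one chirality sharply cuts down the number of ways a given small region can be tiled. In particular I would need to verify that each modified gadget is still ``rigid'' in the right places (forced tilings that propagate the orientation signal along wires and into vertex gadgets) and still ``flexible'' in exactly the places where a genuine choice must be made (the binary choice at an edge gadget), and that the parity/area constraints of the 180-tromino-only setting do not conflict with gluing gadgets of different degrees together. A secondary point to be careful about is that Horiyama \emph{et al.}'s reduction is actually parsimonious (one-one), which gives the $\#$-hardness corollary; if one wants the same bonus here one must check the modified gadgets preserve the bijection, but for the stated NP-completeness result it suffices that the modified gadgets preserve solvability, which is a weaker and more robust property.
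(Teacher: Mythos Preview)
Your plan is structurally reasonable but takes a different and harder route than the paper. You propose to start from Horiyama \emph{et al.}'s L-tromino gadgets and redesign them one by one so that every admissible tiling uses only a single chirality. The paper instead starts from Horiyama \emph{et al.}'s \emph{I-tromino} gadgets (for the same 1-in-3 Graph Orientation Problem) and applies a single uniform transformation: replace every unit cell by a $2\times 2$ block of cells. Each I-tromino in a cover of an original gadget then becomes a $2\times 6$ (or $6\times 2$) rectangle in the refined gadget, and such a rectangle is tileable by four right-oriented 180-trominoes. This immediately gives the forward direction (I-tromino cover of a gadget $\Rightarrow$ 180-cover of the refined gadget); for the converse, the authors check by exhaustive enumeration, for each of the finitely many gadget shapes, that every L-tromino cover of the refined gadget projects back to an I-tromino cover of the original.

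The advantage of the paper's route is precisely that it sidesteps the obstacle you flag in step~(3): no ad-hoc redesign is needed, the rigidity and flexibility properties of each gadget are inherited wholesale from the already-verified I-tromino analysis, and the transformation is manifestly polynomial-time and one-one (so the $\#$P-hardness bonus comes for free). Your approach could in principle succeed, but as you yourself note it would require inventing new single-chirality gadgets and re-verifying the full boundary case analysis from scratch; the $2\times 2$ cell-refinement trick on the I-tromino gadgets is the key idea you are missing.
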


\begin{proof}
The proof uses the same gadgets for the reduction for I-Trominoes from the \emph{1-in-3 Graph Orientation Problem} of Horiyama \emph{et al.} \cite{HIN17}. Take any gadget of Horiyama \emph{et al.} \cite{HIN17} and partition each cell into 4 new cells. Thus, each I-tromino is transformed in a new $2\times 6$ or $6\times 2$ region (depending on the orientation of the I-tromino) which can be covered with four 180-trominoes as in Fig \ref{fig:itromino-180ltromino}. If a gadget is covered with I-trominoes, then the same gadget, after partitioning each cell into four new cells, can also be covered with 180-trominoes. To see the other direction of this implication, we exhaustively examined all possible ways to cover each 4-cell-divided gadget with L-trominoes, and observed that each gadget with its original cells can also be covered with I-trominoes. Figures \ref{fig:dup}, \ref{fig:cross} and \ref{fig:clause} show this.
\end{proof}

\begin{figure}[htb!]
\centering
\subfloat[]{
	\includegraphics[scale=0.6]{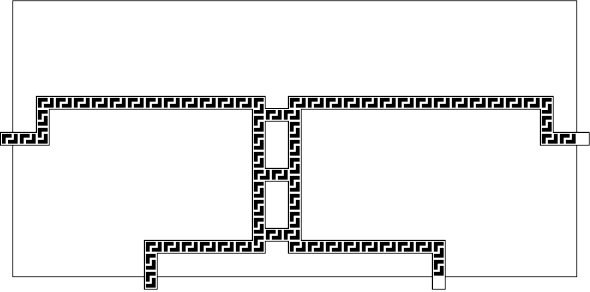}
}
\quad\quad\quad\quad
\subfloat[]{
	\includegraphics[scale=0.6]{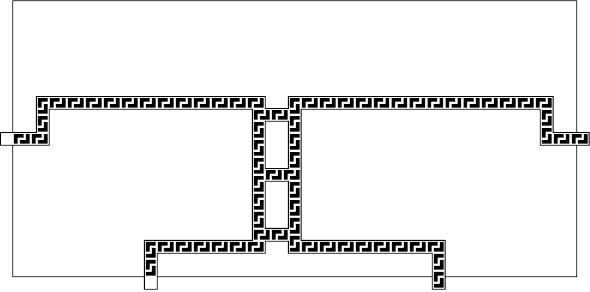}
}
\caption{Tiling of the duplicator gadget.}
\label{fig:dup}
\end{figure}

\begin{figure}[!htb]
\centering
\subfloat[]{
	\includegraphics[scale=0.65]{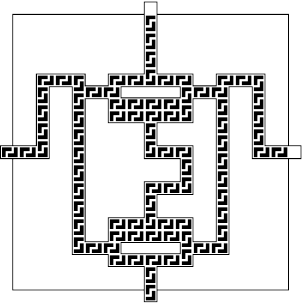}
}
\quad\quad\quad\quad
\subfloat[]{
	\includegraphics[scale=0.65]{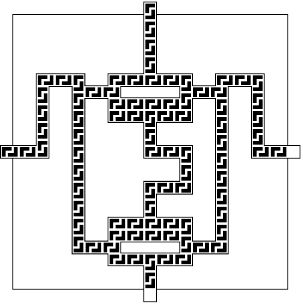}
}\\
\subfloat[]{
	\includegraphics[scale=0.65]{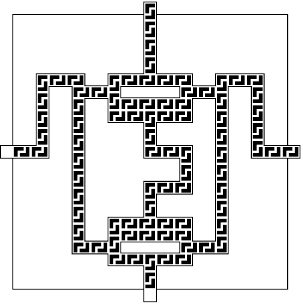}
}
\quad\quad\quad\quad
\subfloat[]{
	\includegraphics[scale=0.65]{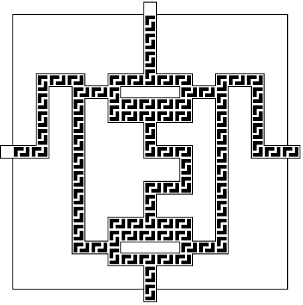}
}
\caption{Tiling of the cross gadget.}
\label{fig:cross}
\end{figure}

\begin{figure}[!htb]
\centering
\subfloat[]{
	\includegraphics[scale=0.65]{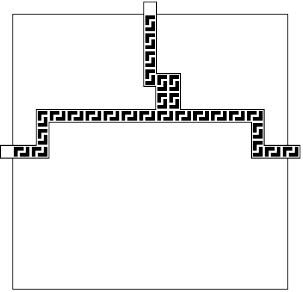}
}
\quad\quad\quad\quad
\subfloat[]{
	\includegraphics[scale=0.65]{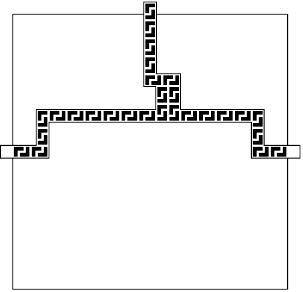}
}
\quad\quad\quad\quad
\subfloat[]{
	\includegraphics[scale=0.65]{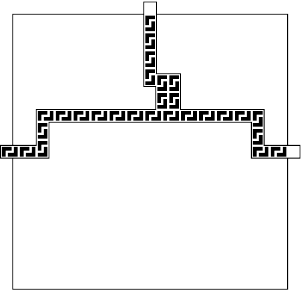}
}
\caption{Tiling of the clause gadget.}
\label{fig:clause}
\end{figure}

\begin{remark}
Theorem \ref{the:180tromino} also implies that the \emph{Triangular Trihex Tiling Problem} of Conway and Lagarias \cite{CL90} (see Figure \ref{fig:trihex}) is NP-complete (see Figure \ref{fig:trihex-2}).

\begin{figure}[htb!]
	\centering
	\includegraphics[scale=0.7]{./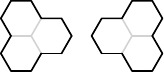}
	\caption{Two triangular trihex.}
	\label{fig:trihex}
\end{figure}
\begin{figure}[htb!]
	\centering
	\includegraphics[scale=0.8]{./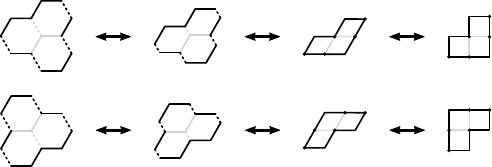}
	\caption{Transformation from triangular trihex to $180$-tromino.}
		\label{fig:trihex-2}
\end{figure}
\end{remark}

It is natural to think along these lines about tiling the Aztec rectangle with 180-trominoes. However, we show that it is impossible.

\begin{theorem}\label{the:Aztec-180}
$\arr$ does not have a 180-cover.
\end{theorem}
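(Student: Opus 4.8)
The plan is to show that a $180$-cover of $\arr$ cannot exist by a coloring / parity argument that exploits the restricted set of allowed trominoes. The key observation is that a right-oriented $180$-tromino (in any of its two $180^\circ$-related placements, as in Fig.~\ref{fig:180tromino}(a)) always covers cells whose coordinates satisfy a very rigid pattern, unlike an unrestricted L-tromino. So first I would set up a suitable weighting of the cells of $\integer^2$ — a natural candidate is a $2$-coloring or a $\integer_3$-valued labeling depending on $a+b \bmod 2$ of the cell's coordinate $(a,b)$ — and compute the total weight that any single right-oriented $180$-tromino contributes. Because both $180^\circ$-placements of a right-oriented tromino occupy the three cells $\{(x,y),(x{+}1,y),(x{+}1,y{+}1)\}$ or $\{(x,y),(x{+}1,y),(x,y{+}1)\}$ (up to the global choice of orientation), one checks that each such tromino contributes a fixed, orientation-independent amount to the chosen statistic.

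**Next I would compute the same statistic for $\arr$ directly** from its description as the union of lattice squares inside $\{|x|+|y|\le n+1\}$-type boundaries, using the cell-count formula of Lemma~\ref{squares} as a sanity check and refining it to count, say, how many cells of each color class the Aztec rectangle contains. The point is that the Aztec rectangle's boundary "staircase" shape forces an imbalance: the cells along its slanted sides come in a pattern that a uniformly-oriented tromino set cannot match. Concretely, I expect to exhibit a quantity $Q$ (a weighted sum over cells) such that every right-oriented $180$-tromino contributes $0$ to $Q$ while $Q(\arr)\neq 0$, immediately precluding a $180$-cover; alternatively, if a single linear invariant is not enough, I would combine a divisibility-by-$3$ obstruction (already handled for the unrestricted case) with a second invariant that detects orientation.

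**The main obstacle** will be choosing the invariant so that it is genuinely sensitive to the orientation restriction — an argument that also rules out ordinary L-tromino covers would be too strong, since those do exist for $\arr$ when $a(b{+}1)+b(a{+}1)\equiv 0 \pmod 3$. Thus the coloring must treat left-oriented and right-oriented trominoes asymmetrically; the corner cells of the Aztec rectangle, where the boundary staircase changes direction, are the places where a right-oriented tromino cannot be placed in a way consistent with tiling the neighborhood, so I would pay particular attention to the four "pointed" corners of $\arr$ and argue that no right-oriented $180$-tromino can cover the extreme cell at (say) the east corner without protruding outside $R$, and then propagate this obstruction inward. Once the right-oriented case is settled, the left-oriented case follows by the left–right mirror symmetry of the Aztec rectangle, completing the proof.

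\begin{proof}
\emph{(Sketch of the intended argument.)}
Assume toward a contradiction that $\arr$ has a $180$-cover; by the reduction in Section~\ref{sub:hardness} we may assume all trominoes are right-oriented. Consider the easternmost cell of $\arr$ (the unique cell of the form $[n,n+1]\times[\,\cdot\,,\cdot\,]$ at the tip of the rectangle). Inspecting the two admissible placements of a right-oriented $180$-tromino in Fig.~\ref{fig:180tromino}(a), one sees that covering this tip cell forces the tromino to occupy a cell with $x$-coordinate $n+1$, which lies outside $\arr$. Hence no right-oriented $180$-tromino can cover the eastern tip of $\arr$, so no $180$-cover exists. The left-oriented case is identical after reflecting $\arr$ through a vertical line, which maps it to a congruent Aztec rectangle. \qed
\end{proof}
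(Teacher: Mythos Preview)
Your final sketch contains a concrete error. First, the Aztec rectangle has no \emph{unique} easternmost cell: the rightmost column always contains exactly two cells. More importantly, each of those cells \emph{can} be covered by a right-oriented $180$-tromino lying entirely inside $\arr$. For instance, in $\ad(2)$ the upper east-tip cell $(3,2)$ is covered by the right-oriented tromino occupying $\{(2,1),(3,1),(3,2)\}$, and this same piece covers the lower east-tip cell $(3,1)$ as well. In fact every one of the four corner regions of an Aztec rectangle is locally coverable by right-oriented trominoes, because both right-oriented shapes in Fig.~\ref{fig:180tromino}(a) contain the diagonal pair $\{(0,0),(1,1)\}$, which is aligned with the NE--SW direction of the east and west corners. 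A one-cell boundary argument of the kind you propose therefore cannot succeed at any corner.

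The actual obstruction is not local to a single cell. The paper's proof looks at the entire southwestern staircase edge of $\arr$, where the boundary runs in the NW--SE direction, \emph{against} the trominoes' forced diagonal. Each boundary cell $(c,d)$ along that side admits only two right-oriented placements, and both of them also cover $(c+1,d+1)$; propagating this forcing along the whole staircase shows that some intermediate cell $(c,d+1)$ or $(c+1,d)$ is inevitably left uncovered. So the argument needs the full length of a side, not a single extremal cell. Your coloring-invariant plan in the opening paragraphs is closer in spirit to a workable proof, but you abandoned it for the incorrect local argument; if you want to salvage the direct approach, look at the southwest (or northeast) side rather than the east tip.
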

\begin{proof}
Consider the southwestern side of any Aztec rectangle as in Figure \ref{fig:az-180tromino} and pick any one of the marked cells, say the cell at coordinate
$(c,d)$. There are only two ways to cover that cell with a right-oriented tromino. With one tromino we can cover the cells 
with coordinates $(c,d),(c,d+1)$ and $(c+1,d+1)$, whereas with the other tromino we can cover the cells $(c,d), (c+1,d)$ and
$(c+1,d+1)$. In either case the cells at $(c,d)$ and $(c+1,d+1)$ are always covered, and depending on which tromino is 
chosen either the cell at $(c,d+1)$ or $(c+1,d)$ is covered. Therefore, if we cover the entire bottom-left side of an Aztec
rectangle, there will always be a cell at $(c,d+1)$ or $(c+1,d)$ that cannot be covered. Note that any reversed fringe 
that is on top of the bottom-left side of any Aztec rectangle can be covered with 180-trominoes if it has one defect.
\end{proof}

\begin{corollary}
$\ad(n)$ does not have a 180-cover.
\end{corollary}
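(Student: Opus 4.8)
The plan is to obtain this as an immediate specialization of Theorem~\ref{the:Aztec-180}. By definition, the Aztec diamond of order $n$ is exactly the Aztec rectangle $\ar_{n,n}$, i.e., the case $a=b=n$. The only point needing a word of care is the standing convention $a<b$ used when stating results about $\arr$; however, the argument in the proof of Theorem~\ref{the:Aztec-180} nowhere invokes that inequality. It only inspects the staircase of ``outer'' marked cells running along the southwestern side of the region, and this staircase is present for every Aztec rectangle, in particular for $\ar_{n,n}$. So the parity/propagation obstruction applies verbatim to the diamond.

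Concretely, I would re-run the local analysis of Theorem~\ref{the:Aztec-180} on $\ad(n)$: pick one of the marked cells at coordinate $(c,d)$ on the southwestern boundary; a right-oriented $180$-tromino covering $(c,d)$ must also cover $(c+1,d+1)$ and, depending on which of the two admissible placements is chosen, exactly one of $(c,d+1)$ and $(c+1,d)$. Sweeping along the whole southwestern side, each successive choice is forced and leaves behind a cell at $(c,d+1)$ or $(c+1,d)$ that no further right-oriented $180$-tromino can reach; hence no $180$-cover exists. Since we agreed at the start of Section~\ref{sec:180tromino} to consider only right-oriented $180$-trominoes, this settles the claim, and the symmetric argument handles the left-oriented case.

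There is essentially no obstacle here: the statement is a corollary precisely because the obstruction identified for general Aztec rectangles is local to the southwestern boundary, whose combinatorial shape is the same for $a=b$ as for $a<b$. The one thing I would state explicitly is the identification $\ad(n)=\ar_{n,n}$, so the reader sees why the convention $a<b$ in Theorem~\ref{the:Aztec-180} does not exclude the diamond case; everything else is inherited directly.
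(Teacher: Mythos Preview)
Your proposal is correct and matches the paper's approach: the corollary is stated without a separate proof because it follows immediately from Theorem~\ref{the:Aztec-180} via the identification $\ad(n)=\ar_{n,n}$. Your extra remark that the proof of Theorem~\ref{the:Aztec-180} never actually uses the convention $a<b$ is a fair clarification, though the paper does not bother to spell it out.
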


\begin{figure}[htb!]
\centering
\includegraphics[scale=0.55]{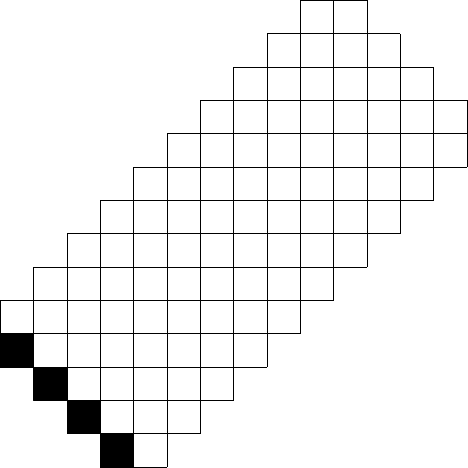}
\caption{Covering of an Aztec rectangle with right-oriented trominoes.}
\label{fig:az-180tromino}
\end{figure}

\subsection{Efficient Tilings}\label{sub:efficient}
In this subsection we identify a sufficient condition for a region to have an efficient algorithm that decides the existence of a 180-cover.

\begin{theorem}\label{the:forbidden}
If a region $R$ does not contain any of the forbidden polyominoes of Figure \ref{fig:forbidden} as a subregion, then there exists a polynomial-time algorithm that decides whether $R$ has a 180-cover.
\end{theorem}

\begin{figure}[htb!]
	\centering
	\includegraphics[scale=0.8]{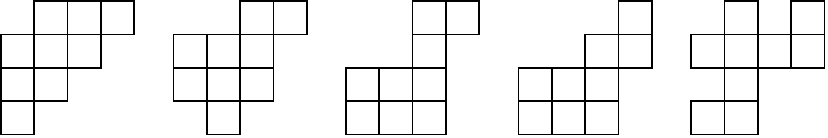}
	\caption{Forbidden polyominoes. All $180^\circ$ rotations, reflections and shear transformations are also forbidden polyominoes.}\label{fig:forbidden}
\end{figure}

A natural and practical application of Theorem \ref{the:forbidden} is in determining if a region does not contain as a subregion any of the forbidden polyominoes, and then, run a polynomial-time algorithm to find a 180-cover. Since the number of forbidden polyominoes is constant, we can check for their non-existence in linear-time. Also, a more theoretical application of Theorem \ref{the:forbidden} is that it could serve as a stepping-stone towards a characterization of regions admitting polynomial-time algorithms.

For the remainder of this subsection we present a proof of Theorem \ref{the:forbidden}. Remember that, with no loss of generality, we only consider right-oriented trominoes. Given a region $R$ we construct a graph $G_R$, which we call the \emph{region graph} of $R$, as follows. For each cell $(a,b)$ that is not a defect there is a vertex $v_{ab}$. There is an edge for each pair of adjacent cells and for each pair $v_{ab}$ and $v_{(a+1)(b+1)}$. Note that this reduction is one-to-one. We present an example in Figure \ref{fig:graphs}.

\begin{figure}[htb!]
	\centering
	\subfloat[Region $R$]{
	\centering
	\includegraphics[scale=1.7]{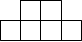}
	}
	\quad\quad\quad\quad\quad\quad
	\subfloat[Region graph $G_R$]{
	\centering
	\includegraphics[scale=1.7]{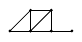}
	}
	\caption{Example of a region graph construction.}
	\label{fig:graphs}
\end{figure}

From the region graph $G_R$ we construct a new graph $I_R$ which we call an \emph{intersection graph} as follows. For each triangle in $G_R$ there is a vertex $t$ and there is an edge between vertices $t_i$ and $t_j$ if the corresponding triangles share a vertex in $G_R$; for example, the intersection graph of Figure \ref{fig:graphs} is a triangle, because all triangles in the region graph share at least one vertex.

\begin{lemma}\label{lem:ind}
For any region $R$ with a fixed number of defects, the maximum number of 180-trominoes that fit in $R$ equals the size of a maximum independent set in $I_R$.
\end{lemma}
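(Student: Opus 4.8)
The plan is to establish a bijection-type correspondence between 180-trominoes packed in $R$ and independent sets in $I_R$, and then argue the extremal quantities agree. First I would show that placing a right-oriented 180-tromino in $R$ is exactly the same thing as selecting a triangle in the region graph $G_R$: by construction, $G_R$ has the ``diagonal'' edges $v_{ab}v_{(a+1)(b+1)}$ together with the horizontal/vertical adjacency edges, and the only triangles in $G_R$ are the four triples $\{v_{ab},v_{a(b+1)},v_{(a+1)(b+1)}\}$ and $\{v_{ab},v_{(a+1)b},v_{(a+1)(b+1)}\}$ (and there are no larger cliques, since a fourth cell adjacent to all three would force a non-diagonal long edge that $G_R$ does not have). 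These are precisely the cell-triples occupied by the right-oriented trominoes of Fig.~\ref{fig:180tromino}(a). Hence trominoes of a 180-cover-fragment $\leftrightarrow$ triangles of $G_R$, and this correspondence is one-to-one because the region-graph reduction is one-to-one.

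Next I would translate ``non-overlapping'' into ``independent in $I_R$''. Two 180-trominoes placed in $R$ are disjoint (do not share a cell) if and only if the corresponding triangles of $G_R$ share no vertex, which by definition of $I_R$ is exactly the condition that the corresponding vertices of $I_R$ are non-adjacent. Therefore a set of pairwise non-overlapping 180-trominoes in $R$ corresponds precisely to an independent set in $I_R$ of the same size, and conversely. Taking maxima on both sides yields that the maximum number of 180-trominoes that fit in $R$ equals the maximum independent set size of $I_R$; the ``fixed number of defects'' hypothesis is used only to ensure $G_R$ and hence $I_R$ is a well-defined finite graph determined by $R$.

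The main obstacle — really the only nontrivial point — is the clique-classification step: I must verify carefully that $G_R$ contains \emph{no} triangle other than the two ``tromino-shaped'' types, and in particular no $K_4$. The worry is that some incidental combination of adjacency edges and diagonal edges could create a spurious triangle whose three cells do not form an L-tromino; I would rule this out by a short case analysis on which edge types appear in the triangle (zero, one, or two diagonal edges), using the fact that the diagonal edges all go in the single direction $(a,b)\to(a{+}1,b{+}1)$ and that two cells joined by a non-diagonal edge differ by a unit horizontal or vertical step. A secondary subtlety worth a sentence is that a 180-tromino sitting inside $R$ automatically has all three of its cells non-defect (else no tile could be placed), so every placed tromino really does correspond to a triangle on actual vertices of $G_R$; conversely every such triangle's three cells lie in $R$ and form a legal tile placement. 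Once the clique classification is in hand, the rest is the routine disjointness $\Leftrightarrow$ independence argument above.
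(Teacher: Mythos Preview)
Your proposal is correct and follows essentially the same approach as the paper: establish the correspondence between 180-tromino placements and triangles in $G_R$, argue that disjointness of tiles is exactly non-adjacency in $I_R$, and conclude by taking maxima on both sides. You are in fact more careful than the paper's own proof, which simply asserts that ``each triangle in $G_R$ corresponds to a position where a 180-tile can fit'' without the triangle-classification case analysis you outline.
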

\begin{proof}
Let $k$ be the maximum number of tiles that fit in $R$ and let $S$ be a maximum independent set in the intersection graph $I_R$. We claim that $|S|=k$.

Each triangle in the region graph $G_R$ correspond to a position where a 180-tile can fit. If $k$ is the maximum number tiles that can fit in $R$, then there exists a set of $k$ triangles in $G_R$, denoted $T$, that do not share any common vertex. Each triangle in $T$ corresponds to a vertex in $I_R$ and since none of the triangles in $T$ share a common vertex, $T$ defines an independent set in $I_R$ and $k\leq |S|$. 

To prove that $|S|=k$ suppose by contradiction that $T$ is not a maximum independent set of $I_R$, that is, $k<|S|$. Since $S$ is an independent set in $I_R$, there are $|S|$ triangles in $G_R$ that do not share a common vertex. Thus, we can fit $|S|$ 180-trominoes in $R$, which is a contradiction because $k<|S|$.
\end{proof}

The idea for a proof of Theorem \ref{the:forbidden} is to construct a polynomial time algorithm that decides the existence of a 180-cover by deciding if a maximum independent set in $I_R$ equals the number of cells of $R$ divided by 3, which agrees with the number of trominoes covering $R$. Deciding the existence of a maximum independent set of a given size is a well-known NP-complete problem, nevertheless, it is known from the works of Minty \cite{Min80}, Sbihi \cite{Sbi80} and Nakamura and Tamura \cite{NT01} that for claw-free graphs\footnote{A graph is \emph{claw-free} if it does not have $K_{1,3}$ (a claw) as an induced subgraph.} finding independent sets can be done in polynomial time. Hence, if $I_R$ is claw-free, then we can use a polynomial time algorithm for finding independent sets to decide the existence of a 180-cover. If $I_R$ has a claw, however, each claw will give one of the forbidden polyominoes.

In Lemma \ref{lem:reduction} below we show that $\strio$ is polynomial time reducible to deciding independent sets, which allow us to construct algorithms for $\strio$ using known algorithms for deciding independent sets. Then in Lemma \ref{lem:claw} we show that if $I_R$ has a claw, then that claw corresponds to a forbidden polyomino in the region $R$.

\begin{lemma}\label{lem:reduction}
There is a many-one polynomial-time reduction from $\strio$  to the problem of deciding existence of an independent set of a given size.
\end{lemma}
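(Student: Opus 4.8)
The plan is to reduce $\strio$ directly to the independent set decision problem via the intersection graph construction $R \mapsto I_R$ already defined in the excerpt. Given an instance $R$ of $\strio$, I would first compute the region graph $G_R$ and then the intersection graph $I_R$, both clearly in polynomial time: $G_R$ has one vertex per non-defect cell and at most a constant number of edges per vertex, so the triangles of $G_R$ (which correspond to placements of a right-oriented tromino) can be enumerated in polynomial time, and $I_R$ has one vertex per such triangle with edges determined by shared vertices in $G_R$. Let $m$ denote the number of non-defect cells of $R$. I would then output the independent set instance $(I_R, m/3)$ if $3 \mid m$, and otherwise output a trivial ``no'' instance (e.g.\ ask for an independent set of size larger than $|V(I_R)|$), since a $180$-cover requires the cell count to be divisible by $3$.

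The correctness argument rests on Lemma \ref{lem:ind}. First observe that any $180$-cover of $R$ uses exactly $m/3$ trominoes, and these occupy $m/3$ triangles of $G_R$ that are pairwise vertex-disjoint, hence form an independent set of size $m/3$ in $I_R$. Conversely, by Lemma \ref{lem:ind} the maximum number of $180$-trominoes that fit in $R$ equals the size of a maximum independent set of $I_R$; so if $I_R$ has an independent set of size $m/3$, then $m/3$ pairwise non-overlapping trominoes fit in $R$, and since they cover $3 \cdot (m/3) = m$ cells with no overlaps inside $R$, they cover every non-defect cell and thus constitute a $180$-cover. Therefore $R$ has a $180$-cover if and only if $I_R$ has an independent set of size $m/3$, which is exactly what the reduction tests.

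There is essentially no hard obstacle here — the substance has been front-loaded into the definitions of $G_R$, $I_R$, and into Lemma \ref{lem:ind}. The one point deserving care is the ``only if'' direction of the covering equivalence: I must make sure that an independent set of size exactly $m/3$ in $I_R$ does not merely give a packing of $m/3$ trominoes but an exact cover. This follows from a counting argument: $m/3$ pairwise vertex-disjoint triangles in $G_R$ correspond to $m/3$ trominoes that pairwise do not overlap and together occupy $m$ distinct cells, all of which are non-defect cells of $R$; since $R$ has exactly $m$ non-defect cells, every such cell is covered. I would also briefly note that the reduction handles the divisibility obstruction cleanly by mapping to a trivially negative instance, so that the map is total and polynomial-time in all cases.

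\begin{proof}
Given a region $R$ with defects, let $m$ be the number of cells of $R$ that are not defects. Construct the region graph $G_R$ and the intersection graph $I_R$ as described above; both constructions take polynomial time, since $G_R$ has $m$ vertices and bounded degree, the triangles of $G_R$ can therefore be listed in polynomial time, and $I_R$ has one vertex per such triangle. If $3 \nmid m$, output the (trivially negative) independent set instance asking whether $I_R$ has an independent set of size $|V(I_R)|+1$. If $3 \mid m$, output the instance asking whether $I_R$ has an independent set of size $m/3$.

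We claim $R$ has a $180$-cover if and only if the output instance is a yes-instance. If $3\nmid m$, then $R$ has no $180$-cover (each tromino covers $3$ cells, so a cover would force $3\mid m$), and the output is negative, as required. Now assume $3\mid m$. Any $180$-cover of $R$ consists of exactly $m/3$ trominoes; each sits on a triangle of $G_R$, and since the trominoes do not overlap, these $m/3$ triangles are pairwise vertex-disjoint in $G_R$ and hence form an independent set of size $m/3$ in $I_R$. Conversely, suppose $I_R$ has an independent set of size $m/3$. By Lemma \ref{lem:ind}, the maximum number of $180$-trominoes that fit in $R$ equals the maximum independent set size in $I_R$, so $m/3$ pairwise non-overlapping $180$-trominoes fit in $R$. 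These cover $3\cdot(m/3)=m$ distinct non-defect cells without overlap; since $R$ has exactly $m$ non-defect cells, every non-defect cell of $R$ is covered, so this packing is a $180$-cover of $R$. This proves the equivalence, and hence the map is a many-one polynomial-time reduction from $\strio$ to deciding the existence of an independent set of a given size.\qed
\end{proof}
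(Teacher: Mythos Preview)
Your proof is correct and follows essentially the same approach as the paper: construct $G_R$ and $I_R$, and use Lemma \ref{lem:ind} to equate $180$-covers of $R$ with independent sets of size $m/3$ in $I_R$. Your write-up is in fact tidier than the paper's, since you handle the $3\nmid m$ case explicitly and frame the map as a genuine many-one reduction (outputting an instance $(I_R,m/3)$) rather than phrasing it as a decision procedure, and you spell out the counting argument that an independent set of size $m/3$ yields a full cover rather than merely a packing.
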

\begin{proof}
First the reduction constructs the region graph $G_R$ and the intersection graph $I_R$. If the size of the largest independent set equals the number of cells of $R$ divided by 3, then output ``yes'' because $R$ has a 180-cover; otherwise output ``no'' because $R$ does not have a 180-cover.

Suppose $R$ has a 180-cover. If $n$ is the number of cells in $R$, then the number of tiles in the 180-cover is $n/3$. By Lemma \ref{lem:ind}, the largest independent set in $I_R$ equals $n/3$. 

Now suppose $R$ does not have a 180-cover. If $n$ is the number of cells in $R$, then $n/3$ is not equal the maximum number of tiles that can fit in $R$. Thus, by Lemma \ref{lem:ind}, it holds that $n/3$ is not equal the size of the largest independent set in $I_R$.
\end{proof} 

\begin{lemma}\label{lem:claw}
If $I_R$ has a claw, then $R$ has at least one forbidden polyomino.
\end{lemma}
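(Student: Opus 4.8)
The plan is to show the contrapositive in spirit: analyze what a claw in $I_R$ says about the region $R$, and argue that the geometric configuration it forces must contain one of the polyominoes of Fig.\ref{fig:forbidden}. Recall that a vertex of $I_R$ is a triangle of $G_R$, and since we only use right-oriented trominoes, each triangle of $G_R$ is one of the two shapes shown in Fig.\ref{fig:180tromino}(a): three cells occupying positions $\{(a,b),(a,b+1),(a+1,b+1)\}$ or $\{(a,b),(a+1,b),(a+1,b+1)\}$. The key structural fact is that \emph{every} triangle in $G_R$ uses the diagonal edge $v_{ab}v_{(a+1)(b+1)}$ for some $a,b$; so a triangle is determined by this diagonal together with a choice of the ``third'' cell, which is either the cell above-left $(a,b+1)$ or the cell below-right $(a+1,b)$.

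First I would set up notation: let $t_0$ be the center of the claw and $t_1,t_2,t_3$ its three leaves, so $t_0$ shares a cell (a vertex of $G_R$) with each $t_i$, but no two of $t_1,t_2,t_3$ share a cell. Write $t_0$ as the triangle on diagonal $(a,b)$–$(a+1,b+1)$, so its three cells are known explicitly (up to the two orientation choices). Each $t_i$ must overlap $t_0$ in at least one of these three cells. I would then enumerate, for each cell $c$ of $t_0$, which right-oriented trominoes can pass through $c$ — there is a bounded list of such positions, since a right-oriented tromino through a fixed cell has only a constant number of placements. Because $t_1,t_2,t_3$ are pairwise cell-disjoint, they cannot all attach to $t_0$ through the same cell in a ``compatible'' way; a short case analysis on which cell(s) of $t_0$ each leaf touches will force the three leaf-trominoes to spread out into a small connected region around $t_0$. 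Taking the union of the cells of $t_0,t_1,t_2,t_3$ then yields a polyomino with at most $12$ cells, and I would check that in each surviving case this union contains (as a subregion) one of the forbidden polyominoes of Fig.\ref{fig:forbidden}, possibly after applying a $180^\circ$ rotation, reflection, or shear — exactly the symmetries declared to preserve the forbidden set.

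The main obstacle, and the bulk of the work, is the case analysis: one must organize the enumeration of how three pairwise-disjoint right-oriented trominoes can each meet a fixed right-oriented tromino, and verify that every resulting configuration realizes a forbidden polyomino. I would manage this by symmetry reductions — fix $t_0$'s orientation by reflection, and exploit the diagonal symmetry of the right-oriented piece — cutting the number of genuinely distinct cases down to a handful, and then exhibit the embedded forbidden polyomino pictorially for each (this is presumably where the figure of forbidden polyominoes was reverse-engineered from). A secondary subtlety is making sure the forbidden subregion actually sits inside $R$ and not merely inside the abstract union of triangles: but since every cell appearing in $t_0,\dots,t_3$ is by construction a non-defect cell of $R$ (it carries a vertex of $G_R$) and adjacency of those cells in $R$ is inherited from $G_R$, the union is genuinely a subregion of $R$, so this point is immediate once the combinatorial claim is in hand.
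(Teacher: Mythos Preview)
Your plan is essentially the paper's own argument: fix the central triangle $t_0$, exhaustively enumerate the ways three pairwise cell-disjoint right-oriented triangles $t_1,t_2,t_3$ can each share a vertex with $t_0$, and read off the resulting polyomino in $R$. Two small sharpenings worth noting: since $t_0$ has only three cells and the $t_i$ are pairwise disjoint, pigeonhole forces each $t_i$ to meet $t_0$ in \emph{exactly one} cell and these three cells are distinct, so the union always has exactly $9$ cells (not merely at most $12$); and in the paper the forbidden polyominoes \emph{are} precisely these $9$-cell unions themselves, grouped into five equivalence classes under $180^\circ$ rotation, reflection, and shear, rather than proper subregions of them.
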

\begin{proof}
For any claw in $I_R$ there is a vertex of degree 3 and three vertices of degree 1, and each vertex in $I_R$ corresponds to a triangle in the region graph $G_R$. We refer to the triangle that corresponds to the degree 3 vertex as the \emph{central triangle} and each degree 1 triangle is called an \emph{adjacent triangle}.  Thus, to obtain all forbidden polyominoes, we look at all posible ways to connect (by the vertices) each adjacent triangle to the central triangle in such a way that 
each adjacent triangle only connects to the central triangle in a single vertex and it is not connected to any other adjacent triangle; otherwise, if an adjacent triangle connects with two vertices of the central triangle or any two adjacent vertices connects with one another, then the induced graph does not corresponds to a claw. In Figure \ref{fig:shear} we show two examples on how to obtain a polyomino from another by a shear transformation. By exhaustively enumerating all possibilities as done in Figure \ref{fig:shear}, we can extract all polyominoes that correspond to claws in $I_R$. We partition this set of polyominoes in five equivalence classes, where two polyominoes are in the same class if and only if one can be obtained from the other by a $180^\circ$ rotation, a reflection or shear transformation. All of these equivalence classes can be seen in Figure \ref{fig:app-forbidden}. 
\end{proof}

\begin{figure}
    \centering
    \subfloat[Polyomino 1 transformation.]{
        \centering
        \includegraphics[scale=1.1]{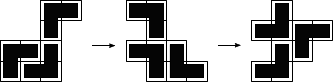}
    }\quad\quad
    \subfloat[Polyomino 2 transformation.]{
        \centering
        \includegraphics[scale=1.1]{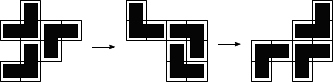}
    }\\
    \subfloat[Region graph of polyomino 1.]{
        \centering
        \includegraphics[scale=1.1]{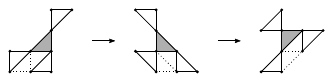}
    }\quad\quad
    \subfloat[Region graph of polyomino 2.]{
        \centering
        \includegraphics[scale=1.1]{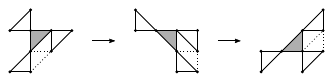}
    }
    \caption{Shear transformation of two polyominoes. The central triangle is painted light gray in each example. Each arrow $\to$ represents a shear transformation.}
    \label{fig:shear}
\end{figure}

\begin{figure}[htb!]
	\centering
	\subfloat{
		\includegraphics[scale=0.5]{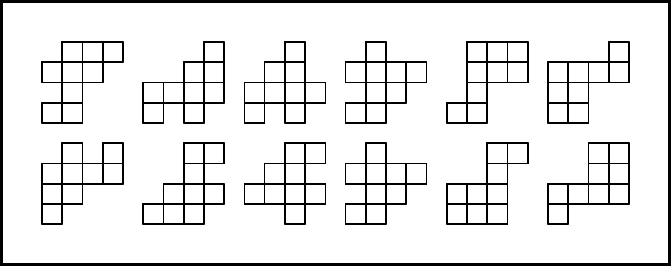}
	}
	\hs{2}
	\subfloat{
		\includegraphics[scale=0.5]{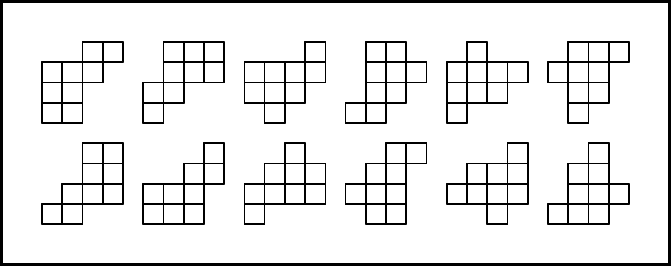}
	}
	\\
	\subfloat{
		\includegraphics[scale=0.5]{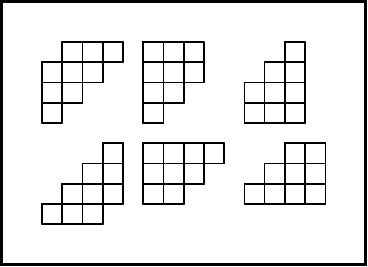}
	}
	\hs{2}
	\subfloat{
		\includegraphics[scale=0.5]{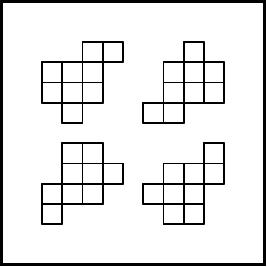}
	}
	\hs{2}
	\subfloat{
		\includegraphics[scale=0.5]{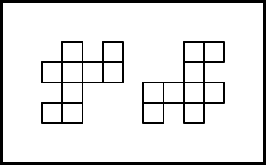}
	}
	\caption{All possible subregions that generate a claw in $I_R$ with their corresponding forbidden polyomino. Any two polyominoes inside a rectangle can be obtained from one to the other by a $180^\circ$ rotation, a reflection or shear transformation.}
	\label{fig:app-forbidden}
\end{figure}

\noindent Lemmas \ref{lem:reduction} and \ref{lem:claw} complete the proof of Theorem \ref{the:forbidden}.

\section{I-Trominoes vs L-Trominoes}\label{sec:li}
In Section \ref{sec:180tromino} we saw that any gadget of Horiyama \emph{et al.} \cite{HIN17} can be covered with I-trominoes if and only if the same gadget, after partitioning each cell into four new cells, can be covered with 180-trominoes. In general, if $R$ is any region and $R^\boxplus$ is the region $R$ where each cell is partitioned into four cells, we have that if $R$ can be covered with I-trominoes, then $R^\boxplus$ can be covered with 180-trominoes.  We do not know, however, if the other way of this implication holds in the general case. The following theorem partly answers this open problem.

\begin{theorem}\label{the:detachable}
Let $R$ be a connected region of size $n$. The region $R^\boxplus$ has an L-Tromino cover if and only if 3 divides $n$.
\end{theorem}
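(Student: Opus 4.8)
The forward direction is immediate: if $R^\boxplus$ has an L-tromino cover, then its number of cells is divisible by $3$; but $R^\boxplus$ has exactly $4n$ cells, so $3\mid 4n$, hence $3\mid n$. The substance is the converse, and the plan is to give an explicit, algorithmic decomposition of $R^\boxplus$ into L-tromino-tileable pieces whenever $3 \mid n$. The key observation is that each cell $c$ of $R$ blows up into a $2\times 2$ block of four cells in $R^\boxplus$; a single $2\times 2$ block cannot be tiled by L-trominoes alone (it has $4$ cells), but \emph{three} $2\times 2$ blocks ($12$ cells) can always be tiled by four L-trominoes \emph{provided the three original cells form a connected tromino-shaped region} (either an I-tromino or an L-tromino in $R$). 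I would verify this by exhibiting the two tilings explicitly: the blown-up I-tromino is a $2\times 6$ (or $6\times 2$) rectangle, which Fig.~\ref{fig:itromino-180ltromino} already shows is coverable by four L-trominoes; the blown-up L-tromino is an L-shaped region of $12$ cells, for which a direct four-piece tiling is easy to draw.

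Granting that, the converse reduces to a purely combinatorial claim about $R$ itself: \emph{every connected polyomino $R$ with $3\mid n$ can be partitioned into trominoes (I- or L-shaped) of the original cells}. This is the analogue, for the L/I-tromino distinction, of a classical fact — a connected polyomino whose cell count is a multiple of $3$ can be decomposed into straight and/or L trominoes. I would prove it by induction on $n$ using a leaf-stripping argument: root a spanning tree of the adjacency graph of $R$'s cells, repeatedly peel off a connected subtree of exactly three cells from a deepest part of the tree (always possible because from any deepest leaf one can walk up at most two edges to collect three cells that form a connected region, and any connected $3$-cell polyomino is a tromino), and recurse on the remaining region. One must check that the remainder stays connected — which it does, since we removed a "pendant" piece hanging off the rest via the spanning tree — and that its size is still a multiple of $3$, which is automatic. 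This matches the informal description in the paper's closing section of "decomposing a region into simple parts."

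Finally I would assemble the pieces: given $3\mid n$, decompose $R$ into $n/3$ trominoes as above; for each such tromino, its blow-up is one of the two $12$-cell regions handled in the first step, so tile each blow-up with four L-trominoes; the union of these tilings is an L-tromino cover of all of $R^\boxplus$, since the blow-ups partition $R^\boxplus$ exactly. The main obstacle I anticipate is the leaf-stripping lemma — specifically, arguing cleanly that one can always detach a connected $3$-cell subregion so that what remains is still connected. The natural fix is to work with a spanning tree and always remove from a maximum-depth vertex: take a deepest leaf $u$, let $w$ be its parent; if $w$ has another child (necessarily also a leaf, by depth maximality) take those two plus $w$; otherwise take $u$, $w$, and $w$'s parent. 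In the first sub-case one checks the three cells are connected (they are: $u$–$w$ and the sibling–$w$ are tree edges), and in both sub-cases the deleted set hangs off the remainder at a single tree vertex, so connectivity of the remainder is preserved. Handling the small base cases ($n = 3$) and confirming the two $12$-cell blow-up tilings are the only genuinely hands-on parts.
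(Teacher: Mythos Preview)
Your forward direction and the observation that the blow-up of any I- or L-tromino is L-tileable are both fine, but the central reduction is wrong: it is \emph{not} true that every connected polyomino $R$ with $3\mid n$ can be partitioned into I- and L-trominoes. A six-cell counterexample is
\[
R=\{(1,1),(0,0),(1,0),(2,0),(1,-1),(1,-2)\},
\]
a plus sign with one arm extended by a single cell. The cell $(1,0)$ has four neighbours in $R$, while each of $(1,1),(0,0),(2,0)$ has $(1,0)$ as its \emph{only} neighbour. Hence every connected three-cell subset of $R$ must contain $(1,0)$, so at most one tromino can be placed, and the three remaining cells are never mutually adjacent. Thus $R$ admits no tromino partition even though $3\mid 6$ and (by the theorem) $R^\boxplus$ does have an L-tromino cover. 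Your spanning-tree peeling breaks precisely here: a deepest leaf $u$ has parent $w=(1,0)$ with three leaf children, and removing $\{u,\text{a sibling},w\}$ strands the third sibling. The claim that ``the deleted set hangs off the remainder at a single tree vertex'' fails whenever $w$ has three or more children, and no reordering of the peeling can repair this, since the target statement itself is false.

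The paper sidesteps the obstruction by \emph{not} trying to partition $R$ into trominoes. It first reduces, via a ``detachability'' argument, to the case where the adjacency graph of $R$ is a tree, and then tiles $R^\boxplus$ directly with a local tagging scheme: each cell of $R$ is classified by its tree degree (leaf, trunk, fork, cross), a tiling pattern is prescribed for the $2\times2$ block above it, and interlocking rules based on residues modulo~$3$ govern how neighbouring patterns fit together. The degree-$3$ and degree-$4$ vertices are handled explicitly; in particular the six-cell region above is not detachable and is covered by the ``cross'' case of the paper's Lemma~\ref{lem:boxplus-cover}.
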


\noindent It is clear that if $R^\boxplus$ has an L-Tromino cover, then 3 divides $n$. We prove the other part in the remainder of this section.

A \emph{planar dual graph}\footnote{Please note that this is an abuse of the term \emph{dual graph} used in graph theory. The practise is, however, standard in combinatorics of tilings.} of a region $R$ is a graph where for each cell $(a,b)$ that is not a defect there is a vertex $v_{ab}$ in $G$ and there is an edge for each pair of adjacent cells. The difference between a planar dual graph defined here and the region graph of Section \ref{sec:180tromino} is that a region graph is a planar dual graph with the addition that there is an edge between each pair of vertices $v_{ab}$ and $v_{(a+1)(b+1)}$.

\begin{definition}
Let $G=(V,E)$ be a connected graph with $n$ vertices and $n$ a multiple of 3. We say that $G$ is \emph{detachable} if there exists a cut $C$ that partitions $G$ in two connected subgraphs of size $n_1$ and $n_2$ such that 3 divides $n_1$ and 3 divides $n_2$. We also say that $C$ detaches $G$.
\end{definition}

\begin{definition}
A region $R$ is \emph{detachable} if its planar dual graph is detachable.
\end{definition}

A very simple fact is that if $R$ is a region of size $n$, then a planar dual graph of $R^\boxplus$ has at least $2n-1$ cycles of size 4. In order to construct a tiling for $R^\boxplus$ we use Lemmas \ref{lem:detach} and \ref{lem:boxplus-cover} given below. But first, we introduce a technical lemma that helps in the proof of Lemma \ref{lem:detach}.

\begin{lemma}\label{lem:cut}
Let $k\geq 3$ be a positive integer and $G$ be any connected graph of size a multiple of $k$ that contains exactly one cycle of length at least least $k + 1$. Then there exists a cut of edges from the cycle that partitions $G$ in two trees whose sizes are both multiples of $k$.
\end{lemma}
\begin{proof}
	
Suppose for the sake of contradiction that any cut of edges in the cycle of $G$ yields two trees: one of size $n_1$ and another of size $n_2$, where $n_1$ or $n_2$ is not a multiple of $k$.

Let $n$ be the number of vertices in $G$ and let $e_0, e_1,\dots,e_k$ be the $k + 1$ edges of the only cycle of $G$, where $n$ is a multiple of $k$. Also fix a vertex $v \in e_0$. It is clear that cutting any two of these edges will partition G in two trees. 

Let $T_{(i,j)}$ be the tree obtained from cutting edges $e_i$ and $e_j$ such that $v$ is a vertex of $T_{(i,j)}$ and let $\overline{T}_{(i,j)}$ denote the other tree, each of size $|T_{(i,j)}|$ and $|\overline{T}_{(i,j)}|$ respectively.  Let $a_1$, $a_2$, ..., $a_k$ be a sequence of $k$ integers where $a_i$ is the size of the tree $T_{(0,i)}$, i.e., $a_i=|T_{(0,i)}|$. Note that by cutting any pair of edges $\{ e_i, e_j \}$, where $i, j>0$, the size of the trees $T_{(i,j)}$ and $\overline{T}_{(i,j)}$ are $|T_{(i,j)}|= n-|a_i - a_j|$ and $|\overline{T}_{(i,j)}| = |a_i - a_j|$, respectively\footnote{Here $|a_i-a_j|$ denotes absolute value.}. In Figure \ref{fig:sizes} we present a pictorial explanation of how to obtain the values of $|T_{(i,j)}|$ and $|\overline{T}_{(i,j)}|$.

\begin{figure}[t]
\centering
\includegraphics[scale=1.5]{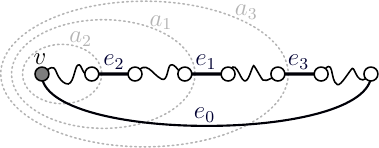}
\caption{Cycle of the graph with sizes $a_1,a_2$ and $a_3$ whose corresponding cuts are indicated by the dashed lines. Thus $T_{(0,2)}$ is the tree inside the dashed line with label $a_2$, $T_{(0,1)}$ is the tree inside the dashed line with label $a_1$ and $T_{(0,3)}$ is the tree inside the dashed line with label $a_3$. From the figure it is easy to see that, for example, if we cut $e_2$ and $e_1$, then $\overline{T}_{(2,1)}=|a_2-a_1|$ and $|T_{(2,1)}|=n-|a_2-a_1|$ where the vertex $v$ is in $T_{(2,1)}$.}
\label{fig:sizes}
\end{figure}


The sequence $a_1,\dots,a_k$ always satisfies one of two cases: (i) the values of $a_i$ modulo $k$ are all different, or (ii) there exists at least one congruent pair $\{a_i, a_j\}$ modulo $k$, \textit{i.e.}, $a_i \equiv a_j \pmod k$ where $i \neq j$. If they are all different, there is some $a_i$ that is congruent to 0 modulo $k$. By cutting the pair of edges $\{e_0, e_i\}$ we have $|T_{(0,j)}|=a_i \equiv 0 \pmod k$ and $|\overline{T}_{(0,j)}|= n-a_i \equiv 0 \pmod k$, which is a contradiction. If we have a congruent pair $\{a_i, a_j\}$, then if we cut the pair of edges $\{ e_i, e_j \}$ we have that $|T_{(i,j)}| = |a_i - a_j| \equiv 0 \pmod k$ and $|\overline{T}_{(i,j)}| = n- |a_i - a_j| \equiv 0 \pmod k$, which is again a contradiction. Therefore, we conclude that there is a cut where $n_1$ and $n_2$ are both multiples of $k$.
\end{proof}

\begin{lemma}\label{lem:detach}
Let $G$ be any connected graph of size $n$ with $n$ a multiple of 3. If $G$ contains a cycle of size at least 4, then $G$ is detachable.
\end{lemma}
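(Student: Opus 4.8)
The plan is to use the given long cycle together with a well-chosen spanning tree to cut $V(G)$ into $k$ connected ``blocks'' arranged cyclically, and then to apply a pigeonhole argument modulo $3$ to the block sizes.

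First I would fix a cycle $C = v_0 v_1 \cdots v_{k-1} v_0$ in $G$ with $k \ge 4$, let $P$ be the path $v_0 v_1 \cdots v_{k-1}$ (that is, $C$ minus the edge $v_0 v_{k-1}$), and extend the forest $P$ to a spanning tree $T$ of $G$ with $E(P) \subseteq E(T)$; then $v_0 v_{k-1}$ is an edge of $G$ not in $T$, since $T$ cannot contain all of $C$. Deleting the $k-1$ edges of $P$ from $T$ leaves a forest with exactly $k$ components; let $D_i$ be the component containing $v_i$. Then $D_0, \dots, D_{k-1}$ partition $V(G)$, each $D_i$ induces a subtree of $T$ through $v_i$, and I would verify the key structural fact: for any cyclic interval $J \subseteq \{0, \dots, k-1\}$, the set $\bigcup_{i \in J} D_i$ induces a connected subgraph of $G$ — consecutive blocks are joined by edges of $P \subseteq T$, and a wrap-around interval is additionally stitched together through the kept-out edge $v_0 v_{k-1} \in E(G)$ — and likewise for the complementary cyclic interval.

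Next I would set $d_i = |D_i| \ge 1$, form prefix sums $S_0 = 0$ and $S_j = d_0 + \cdots + d_{j-1}$ for $1 \le j \le k$ (so that $S_k = n$), and invoke pigeonhole: among the $k \ge 4$ residues $S_0, \dots, S_{k-1} \bmod 3$ two coincide, say $S_a \equiv S_b \pmod 3$ with $0 \le a < b \le k-1$. Taking $V_1 = \bigcup_{i=a}^{b-1} D_i$ and $V_2 = V(G) \setminus V_1$, both induced subgraphs are connected and nonempty by the structural fact; $|V_1| = S_b - S_a$ is divisible by $3$ and satisfies $0 < |V_1| < n$ (it is a sum of at least one and at most $k-1$ of the $d_i$, each $\ge 1$); and $|V_2| = n - |V_1| \equiv 0 \pmod 3$ because $3 \mid n$. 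Hence the edge cut between $V_1$ and $V_2$ detaches $G$, which is exactly what detachability requires.

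The only delicate step is the structural fact — in particular making the wrap-around case precise via the single cycle edge $v_0 v_{k-1}$ that was kept out of $T$, and confirming that the interval produced by pigeonhole is both nonempty and proper. I expect this bookkeeping to be the main obstacle; everything else is immediate. Note that $k \ge 4$ is used exactly once, in the pigeonhole step: for a triangle the three prefix residues $S_0, S_1, S_2$ may all be distinct, so the argument genuinely needs a cycle of length at least $4$.
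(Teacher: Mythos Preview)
Your argument is correct and takes a cleaner route than the paper's. The paper repeatedly deletes an edge from some cycle until the graph becomes unicyclic, then asserts that two edges of the surviving cycle can be removed so as to leave two trees whose orders are both divisible by $3$; its justification for this last step is essentially the one-line remark ``$n=n_1+n_2$ and $3\mid n$, so both $n_1,n_2$ are multiples of $3$'', which does not actually establish existence of such a pair of edges, and the edge-deletion procedure as written may even terminate with a triangle as the final cycle despite the hypothesis. You sidestep both issues by working directly with the given cycle $C$ of length $k\ge 4$: extending a Hamiltonian path of $C$ to a spanning tree yields $k$ blocks $D_0,\dots,D_{k-1}$ arranged cyclically, and pigeonhole on the $k\ge 4$ prefix sums modulo $3$ produces the bipartition. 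Your structural fact that any cyclic interval of blocks induces a connected subgraph of $G$ (via the reserved edge $v_0v_{k-1}$ for the wrap-around case) is correct, and your observation that $k\ge 4$ is used exactly in the pigeonhole step pinpoints where the hypothesis enters---something the paper's argument leaves opaque. In short, the two approaches share the same skeleton (decompose into blocks along a cycle, then cut), but your pigeonhole step supplies precisely the rigorous existence argument that the paper's proof only gestures at.
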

\begin{proof}
Suppose that $G$ has at least one cycle. The following procedure finds a cut that detaches $G$.
\begin{enumerate}
\item $C\gets \{\}$.
\item Repeat until exactly one cycle remains in $G$.\label{st:repeat}
	\begin{enumerate}
    \item Pick any cycle $X$ in $G$.
    \item Pick any edge $e$ in $X$.
    \item Delete $e$ from $G$.
    \item Add $e$ to $C$.
    \end{enumerate}
\item From the last cycle in $G$ select two edges $e_1$ and $e_2$ such that $G$ is partitioned in two connected components $G_1$ and $G_2$ each of size a multiple of 3.\label{st:cut}
\item Add $e_1$ and $e_2$ to $C$.
\item Add edges from $C$ to $G_1$ and $G_2$ in such a way that $G_1$ and $G_2$ become induced subgraphs of $G$. All edges that remain in $C$ define a cut that detaches $G$ in $G_1$ and $G_2$.\label{st:x}
\end{enumerate}

The procedure presented above clearly terminates because at each iteration the number of cycles is decreased in one. At the end of Step \ref{st:repeat}, $G$ contains exactly one cycle. We need to show that in Step \ref{st:cut}, there is a cut that partitions $G$ in two connected components each of size a multiple of 3. Indeed, at the end of Step \ref{st:cut} and by letting $k=3$ in Lemma \ref{lem:cut}, there exist two edges $e_1$ and $e_2$ that partitions $G$ in two trees each of size a multiple of 3.

Finally, in Step \ref{st:x}, since $G_1$ and $G_2$ are induced subgraphs of $G$ and there are no edges between $G_1$ and $G_2$, the set $C$ defines a cut of $G$.
\end{proof}

\begin{corollary}\label{cor:tree}
For any planar dual graph $G$ of a region $R$, if $G$ is not detachable, then $G$ is a tree.
\end{corollary}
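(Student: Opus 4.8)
The plan is to read Corollary~\ref{cor:tree} off as the contrapositive of Lemma~\ref{lem:detach}, combined with one elementary structural feature of grid graphs. Note first that the notion of \emph{detachable} is only defined for connected graphs on $n$ vertices with $3 \mid n$, so the hypothesis ``$G$ is not detachable'' already carries with it that $G$ is connected and that $3$ divides $n$ — these are exactly the standing hypotheses of Lemma~\ref{lem:detach}.

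The one thing to verify by hand is that a graph representation $G$ of a region has girth at least $4$, i.e.\ contains no triangle. An edge of $G$ joins two cells whose coordinates are at $L_1$-distance $1$, and a one-line case check shows that no three cells of $\integer^2$ are pairwise at $L_1$-distance $1$ (equivalently: the checkerboard $2$-coloring of the cells is a proper $2$-coloring of $G$, so $G$ is bipartite and all its cycles have even length). Hence every cycle of $G$, if any, has length at least $4$.

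Then I would apply Lemma~\ref{lem:detach} in contrapositive form: since $G$ is connected, $3 \mid n$, and $G$ is not detachable, $G$ cannot contain a cycle of size at least $4$. Together with the previous paragraph this forces $G$ to be acyclic, and a connected acyclic graph is a tree. Therefore $G$ is a tree.

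There is essentially no obstacle here; the only point that deserves to be stated carefully is the girth bound (no triangles in a grid graph), since that is precisely what lets us upgrade ``$G$ contains no cycle of size $\geq 4$'' to ``$G$ is acyclic.'' Everything else is the contrapositive of a lemma already proved.
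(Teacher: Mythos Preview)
Your proof is correct and follows essentially the same approach as the paper: apply Lemma~\ref{lem:detach} in contrapositive, using the observation that every cycle in a graph representation of a region has length at least~4. The paper states this girth bound without justification, whereas you supply the bipartite/checkerboard argument; otherwise the arguments are identical.
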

\begin{proof}
This corollary follows immediately from Lemma \ref{lem:detach} by observing that all cycles in a planar dual graph  $G$ of a region $R$ are of size at least 4.
\end{proof}

\begin{lemma}\label{lem:boxplus-cover}
Let $R$ be a connected region of size $n$ with $n$ a multiple of 3. If $R$ is not detachable, then $R^\boxplus$ has a cover.
\end{lemma}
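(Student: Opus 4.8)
The plan is to prove the lemma by induction on $n$, after invoking Corollary~\ref{cor:tree} to reduce to a tree polyomino and recording the structural restrictions that non-detachability forces on such trees.

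Since $R$ is not detachable, its graph representation is a tree by Corollary~\ref{cor:tree}. Fix a root cell $r$ and, for a cell $v$, let $s(v)$ denote the size of the subtree below $v$. Non-detachability is exactly the statement that $s(v)\not\equiv 0\pmod 3$ for every non-root $v$, since otherwise cutting the edge above $v$ would detach $R$ into two pieces of size divisible by $3$. Consequently, along any maximal chain $u_1-u_2-\cdots$ of degree-$2$ cells the quantities $s(u_i)$ are consecutive integers, so no such chain has three cells; hence every pendant path of $R$ has at most two cells, and every path joining two branch cells (cells of degree at least $3$) has at most two interior cells. Combined with the fact that each cell has at most four neighbours, this shows that the ``star'' at any branch cell --- the branch cell together with the short paths hanging off it --- is one of only finitely many shapes up to rotations and reflections. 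In particular, if $n>3$ then $R$ contains a branch cell.

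For the induction, the base case $n=3$ is clear: $R$ is an I-tromino or an L-tromino, and $R^\boxplus$ is a $2\times 6$ (or $6\times 2$) rectangle, respectively a staircase of three $2\times 2$ squares, each of which admits an explicit L-tiling; the finitely many ``spiders'' of size $6$ or $9$ (a single branch cell with three or four legs of length at most two) are likewise checked directly. Now let $n>3$ and let $v$ be a \emph{lowest} branch cell, i.e., one with no branch cell strictly below it; we may assume $v\ne r$. Then only paths of length $1$ or $2$ hang from $v$, so the sub-polyomino $S$ formed by $v$ and those paths has one of finitely many shapes and meets the remainder of $R$ only through the parent $p$ of $v$. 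I would replace $S$ inside $R$ by a single cell, or by a domino, glued to $p$ at $v$'s position --- whichever of the two has the same size modulo $3$ as $S$ --- obtaining a connected region $R'$ with $3\mid |R'|$ and $|R'|<n$ which is again a non-detachable tree (this uses $s(v)\not\equiv 0\pmod 3$ once more, so that all subtree sizes remain nonzero modulo $3$ after the contraction). By the inductive hypothesis $(R')^\boxplus$ has an L-tiling; it then remains to delete from that tiling the one or two $2\times 2$ blocks coming from the inserted cell or domino, together with every tromino meeting them, and to fill the resulting bounded hole --- now to be occupied by the genuine $S^\boxplus$ --- with L-trominoes. By the structure statement this is a finite case check, which one would present as figures. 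Combining these reductions with the recursive detaching in the proof of Lemma~\ref{lem:detach} then yields Theorem~\ref{the:detachable}.

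I expect the case check at the junction to be the genuine obstacle, and the reason it is subtler than it first appears is that it cannot be carried out block by block. Already at $n=6$, for instance when $R$ is a branch cell carrying one pendant edge and two pendant arms of two cells, every L-tiling of $R^\boxplus$ must use trominoes straddling two or even three consecutive $2\times 2$ blocks: a leftover cell passed from one block to the next along a straight line would force a forbidden I-shaped tromino. Hence the inductive statement has to be strong enough to pin down which cells near the interface remain uncovered when a subregion's $\boxplus$ is only partially tiled, and the construction must choose the straddling trominoes at each junction so that these interface data agree on both sides. Getting that formulation right, and verifying the finitely many junction configurations, is the technical core of the argument; the reduction to trees, the induction skeleton, and the recursive reassembly are routine.
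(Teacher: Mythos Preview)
Your structural observations are correct and match the paper's: non-detachability forces the graph representation to be a tree (Corollary~\ref{cor:tree}), and the residues $s(v)\bmod 3$ are the governing combinatorial data. But your proof is not complete, and the gap is exactly the one you name yourself. The inductive hypothesis as stated---``$(R')^\boxplus$ has an L-tiling''---gives you no control over what that tiling does across the $p$--$v$ interface, so when you excise the trominoes meeting the replacement block(s) you have no a~priori bound on the shape of the hole you must refill. Saying ``this is a finite case check'' is only true once you have strengthened the hypothesis to pin down the boundary behaviour, and you have not formulated that strengthening. As written, the argument is an outline, not a proof.

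The paper takes a different and more direct route that sidesteps the induction entirely and resolves precisely this interface issue. Rather than contract and reinsert, it classifies every cell of the tree by its degree---leaf, (straight or bent) trunk, fork, cross---and for each type prescribes an explicit tiling pattern of the corresponding $2\times 2$ block in $R^\boxplus$. Each pattern leaves a controlled deficit along each side where the cell meets a neighbour, and the side is labelled by a \emph{tag} in $\{1,2\}$ recording the size modulo~$3$ of the component on the other side of that edge. The patterns are designed so that a tag-$1$ side always mates with a tag-$2$ side, and non-detachability guarantees that every internal edge of the tree carries a $1$/$2$ pair. So the ``interface data'' you were groping for is made explicit from the outset and is local: one tag per edge, finitely many patterns per cell type, no recursion. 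Your induction could presumably be pushed through by carrying exactly this tag information in the hypothesis, but at that point it collapses into the paper's construction.
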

\begin{proof}
Let $R$ be a connected region of size $n$, with $n$ a multiple of 3, that is not detachable. By Corollary \ref{cor:tree}, the planar dual graph $G$ of $R$ is a tree.

There are four types of vertices in $G$, leaves (degree 1), trunks (degree 2), forks (degree 3), and crosses (degree 4). In order to construct a cover for $R^\boxplus$, we will present a tiling for each type of vertex and then show how to assemble all the parts.

The idea for constructing a tiling is to assign a ``tag'' to each cell of $R$ and its corresponding vertex of $G$ that will help in constructing the entire cover for $R^\boxplus$. In what follows we show how to assign theses tags and later how to use them to construct a cover.

First we consider a leaf, say $\ell$. Since $n$ is a multiple of 3, if we delete $\ell$ from $G$, the size of the resulting graph equals 2 modulo 3. We use the size of the resulting graph modulo 3 as a tag for $\ell$. See Figure \ref{fig:leaf}(a). In the figure we assumed that the cell corresponding to $\ell$ is attached to the region $R$ on its right side. The general rule is that a tag is assigned to the side of the cell that is connected to $R$

For a leaf we  use a tiling pattern of a single tromino as shown in Figure \ref{fig:leaf}(b). The remaining cell in the lower right corner of Figure \ref{fig:leaf} will be covered after assembling the leaf with another vertex. A reflected or rotated leaf uses a reflected or rotated tromino tiling pattern, respectively.

\begin{figure}[htb!]
	\centering
	\subfloat[Leaf tag]{
		\centering
		\hspace{0.1cm}
		\includegraphics[scale=1]{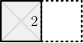}
		\hspace{0.1cm}
	}
	\quad\quad\quad\quad\quad\quad
	\subfloat[Tiling pattern]{
		\hspace{0.8cm}
		\centering
		\includegraphics[scale=1]{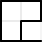}
		\hspace{0.8cm}
	}
\caption{Leaf}
\label{fig:leaf}
\end{figure}

Now we consider a trunk. Here we have two cases, a straight trunk and a bent trunk, see Figures \ref{fig:trunk}(a) and \ref{fig:trunk}(b). We add the tags 1 and 1 to the sides of a trunk because that is the size modulo 3 of both resulting graphs after deleting the trunk. The tiling pattern for both types of trunks is presented in Figures \ref{fig:trunk}(c) and \ref{fig:trunk}(d). In Figure \ref{fig:trunk}(a) note that there are four different ways to place a tromino, however, we use this tiling pattern in a way to enforce that any two trominoes placed on a straight trunk must  have  two  of its cells over the straight trunk. This way, depending on how a straight trunk is connected to the region a correct tiling can be chosen.

\begin{figure}[!htb]
\centering
\subfloat[Straight trunk]{
	\hspace{0.1cm}
	\raisebox{.685cm}{\includegraphics[scale=1]{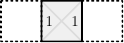}}
	\hspace{0.1cm}
}
\quad\quad\quad\quad
\subfloat[Bent trunk]{
	\hspace{0.1cm}
	\includegraphics[scale=1]{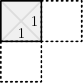}
	\hspace{0.1cm}
}
\quad\quad\quad\quad
\subfloat[Straight trunk tiling]{
	\hspace{0.8cm}
	\raisebox{.685cm}{\includegraphics[scale=1]{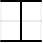}}
	\hspace{0.8cm}
}
\quad\quad\quad\quad
\subfloat[Bent trunk tiling]{
	\hspace{0.8cm}
	\raisebox{.685cm}{\includegraphics[scale=1]{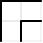}}
	\hspace{0.8cm}
	}
\caption{Trunk.}
\label{fig:trunk}
\end{figure}

Before proceeding to the next type of cell, we show how to use the information of the tags to construct a tiling for the type of cells introduced so far. In Figure \ref{fig:small-example} we present a tiling example of a region with two leafs and one trunk. To construct a tiling we always join a 1 tag with a 2 tag. Note that in Figure \ref{fig:small-example}(c) in order to attach the lower leaf to the trunk, the tiling pattern of Figure \ref{fig:leaf} had to be reflected and then rotated.

\begin{figure}[!htb]
\centering
\subfloat[Region $R$]{
	\includegraphics[scale=1]{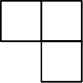}
}
\quad\quad\quad\quad
\subfloat[Tag assignment]{
	\includegraphics[scale=1]{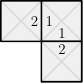}
}
\quad\quad\quad\quad
\subfloat[Tiling of $R^\boxplus$]{
	\includegraphics[scale=1]{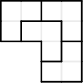}
}
\caption{Tiling example of a region with two leafs and one trunk.}
\label{fig:small-example}
\end{figure}

In a fork there are two cases to consider depending on how a trunk is connected to the region, see Figure \ref{fig:fork}. The tags again are assigned depending on the sizes modulo 3 of the graphs that result after deleting the fork. The tags assignment produces two types of forks called a 2-2-1 fork and a 2-1-2 fork. Note that these tags are the same for the rotated and reflected variants of a fork. Figures \ref{fig:fork}(c,d) present the tiling patterns.

\begin{figure}[!htb]
\centering
\subfloat[2-2-1 Fork]{
	\hspace{0.1cm}
	\includegraphics[scale=1]{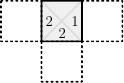}
	\hspace{0.1cm}
}
\quad\quad\quad\quad
\subfloat[2-1-2 Fork]{
	\hspace{0.1cm}
	\includegraphics[scale=1]{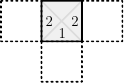}
	\hspace{0.1cm}
}
\quad\quad\quad\quad
\subfloat[2-2-1 Fork tiling]{
	\hspace{0.8cm}
	\raisebox{.685cm}{\includegraphics[scale=1]{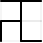}}
	\hspace{0.8cm}
}
\quad\quad\quad\quad
\subfloat[2-1-2 Fork tiling]{
	\hspace{0.8cm}
	\raisebox{.685cm}{\includegraphics[scale=1]{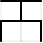}}
	\hspace{0.8cm}
}
\caption{Fork.}
\label{fig:fork}
\end{figure}

Finally, we consider a cross which has two cases as shown in Figure \ref{fig:cross-2}. As in previous cases, the tiling patterns work for the reflected and rotated versions of the trominoes.

\begin{figure}[!htb]
\centering
\subfloat[$2^1$ Cross]{
	\hspace{0.1cm}
	\includegraphics[scale=1]{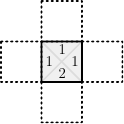}
	\hspace{0.1cm}
}
\quad\quad\quad\quad
\subfloat[$2^4$ Cross]{
	\hspace{0.1cm}
	\includegraphics[scale=1]{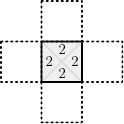}
	\hspace{0.1cm}
}
\quad\quad\quad\quad
\subfloat[$2^1$ Cross tiling]{
	\hspace{0.8cm}
	\raisebox{.685cm}{\includegraphics[scale=1]{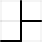}}
	\hspace{0.8cm}
}
\quad\quad\quad\quad
\subfloat[$2^4$ Cross tiling]{
	\hspace{0.8cm}
	\raisebox{.685cm}{\includegraphics[scale=1]{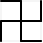}}
	\hspace{0.8cm}
}
\caption{Cross.}
\label{fig:cross-2}
\end{figure}

The general rule for constructing a tiling is thus to use the patterns presented above and joining them by their opposing tags and the use of rotations and reflections of the pattern.

Note that since $R$ is not detachable, its planar dual graph is a tree. This implies that certain cells are connected to other cells only in certain ways. For example, the bent trunk of Figure \ref{fig:trunk}(b) cannot be adjacent to the the fork of Figure \ref{fig:fork} even though they have opposing tags. That is because if they were connected in that way, then $R$ would have a cycle in its dual planar graph. They can, however, be connected in other ways.
\end{proof}

To finish the proof of Theorem \ref{the:detachable}, take any region $R$ and apply the following procedure to decompose $R$ into a set of non-detachable subregions. Find a cut that detaches $R$ in two subregions $R_1$ and $R_2$; if any of these subregions $R_1$ or $R_2$ are detachable, keep finding cuts that detach them in smaller subregions and so on. Repeat this procedure until all subregions are not detachable. By Lemma \ref{lem:detach} it is always possible to decompose $R$ in this way.

Let $R_1,\dots R_k$ be a collection of subregions of $R$ that are not detachable obtained by the procedure of the preceding paragraph. By Lemma \ref{lem:boxplus-cover} each $R_i$ has a cover. Since each $R_i$ was obtained by decomposing $R$ in detachable parts, we can then join the covers of all subregions to obtain a cover for $R$. In Figure \ref{fig:big-region} we present an example of a detachable region with all four types of cells given in the proof of Lemma \ref{lem:boxplus-cover}. Note that if the region $R$ is a tree, then we can find a cut that detaches its dual planar graph using depth-first search.

\begin{figure}[!htb]
\centering
\subfloat[Region $R$]{
	\includegraphics[scale=0.9]{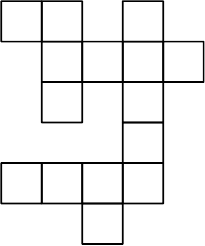}
}
\quad\quad\quad\quad
\subfloat[Tags assignment]{
	\includegraphics[scale=0.9]{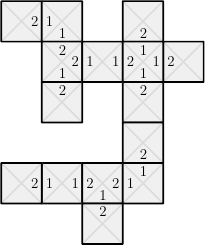}
}
\quad\quad\quad\quad
\subfloat[Tiling of $R^\boxplus$]{
	\includegraphics[scale=0.9]{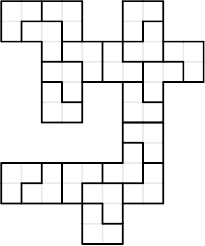}
}
\caption{Example of a detachable region $R$ with a tiling for $R^\boxplus$.}
\label{fig:big-region}
\end{figure}

The proof of Theorem \ref{the:detachable} also gives an efficient algorithm to find covers for any $R^\boxplus$.

\section{Concluding Remarks and Open Problems}\label{sec:conclusions}
In this work we studied the computational hardness of tiling arbitrary regions with L-trominoes. We showed restrictions to the problem that keeps it computationally intractable and identified concrete instances where  an efficient tiling exists. In particular we showed that tiling an Aztec rectangle (and hence, Aztec diamond) with defects is still a hard problem, but in the presence of 0 or 1 defects, a tiling is decidable in polynomial time. Furthermore, even if we restrict the problem of tiling an arbitrary region with $180^\circ$ rotations of  L-trominoes it remains intractable.  We showed, however, that if the region does not contains a so-called ``forbidden polyomino'' as a subregion, then the tiling problem is decidable in polynomial time.

We conclude this paper with some open problems that we consider challenging and that we believe will fuel future research in the subject.
\begin{enumerate}
\item \emph{Hardness of tiling the Aztec rectangle with a given number of defects.} In Section \ref{sec:Aztec} we saw that an Aztec rectangle with 0 or 1 defects can be covered with L-trominoes in polynomial time, whereas in general the problem is NP-complete when the Aztec rectangle has an unknown number of defects; with $2+3k$, for every $k$, an Aztec rectangle cannot be covered because the number of cells is not divisible by 3. It is open if there exists a polynomial time algorithm for deciding a tiling for an Aztec rectangle with a given number of defects.
\item \emph{Tiling of orthogonally-convex regions.} In this work we showed several instances where a tiling can be found in polynomial time. In general, it is open if an orthogonally-convex region with no defects can be covered in polynomial time or if it is NP-complete to decide if a tiling exists.
\item \emph{Enumeration of tromino tilings.} We have not considered the problem of enumerating tromino tilings of the regions described in this paper. In general, there are no such formulas known in the literature for the shapes studied so far. However, we leave it as an open problem for future research to enumerate such tilings. (See \ref{enum} for some discussion.) 
\end{enumerate}

\section*{Acknowledgements}

The authors thank the reviewers and the editor for helpful comments and suggestions.

\appendix

\section{Enumeration of L-tromino tilings of Aztec diamonds}\label{enum}
We have some amount of data for the number of L-tromino tilings of Aztec diamonds (denoted by $T(n)$ for an order $n$ Aztec diamond), which we present in Table \ref{table}. As already commented earlier, it is difficult in general to enumerate polyomino tilings of regions on a finite lattice. However, we can give very simple bounds on the number of such tilings.
\begin{table}
\centering
\begin{tabular}{ll} 
$n$ & $T(n)$   \\ 
\hline
 $1$ &  $0$ \\
 $2$ &  $3$ \\
 $3$ &  $18$ \\
 $4$ &  $0$ \\
 $5$ &  $16856$ \\
$6$ &  $2931525$ \\
 $7$ & $0$ 
\end{tabular}
\caption{Values of number of L-tromino tilings of Aztec diamonds.}
\label{table}
\end{table}

\begin{theorem}
If $n=3k$ for some $k>0$, then we have
\begin{equation}\label{eq:enum}
    T(n)\geq 4T(n-1)+4\sum_{l=1}^{k-1}(l-1)T(3l)+\sum_{l=1}^{k-1}(l-4)T(3l-1).
\end{equation}
\end{theorem}

\begin{proof}
We first notice that any tiling of $\ad(n)$ contains all other tilings of $\ad(m)$ for $0<m<n$ and $m(m+1)\equiv 0 \pmod 3$, as subtilings. This is clear from Figure \ref{fig:enum-1}.
\begin{figure}[htb!]
\centering
\subfloat[Getting a tiling of $\ad(m)$ from $\ad(m-1)$.]{
	\includegraphics[scale=0.5]{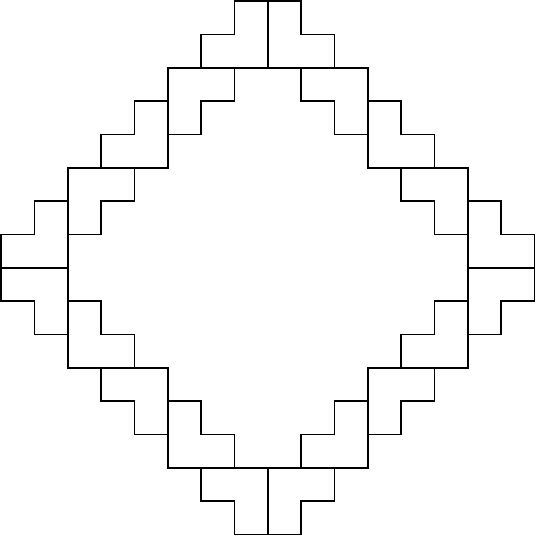}
}
\quad
\subfloat[Getting a tiling of $\ad(m)$ from $\ad(m-2)$.]{
	\includegraphics[scale=0.35]{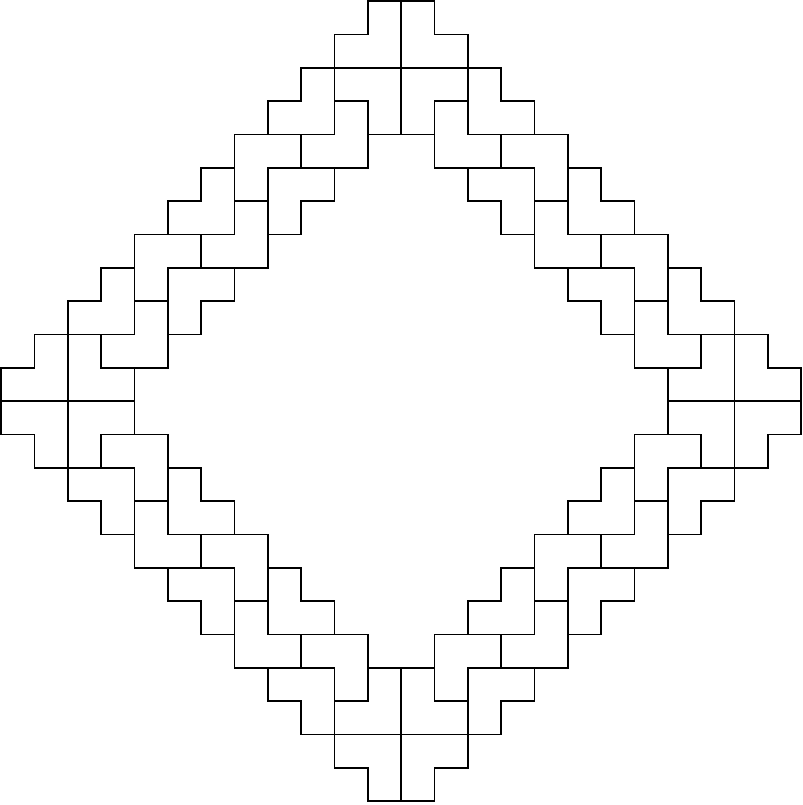}
}
\caption{Tilings using stairs.}
\label{fig:enum-1}
\end{figure}
We further see that $\ad(n)$ can be tiled from any tiling of $\ad(n-1)$ by appending stairs on two adjacent boundary sides. This is shown in Figure \ref{fig:enum-2}. There are $4$ ways to select adjacent sides, which gives the factor of $4$ in the first term of the right hand side of \eqref{eq:enum}.
\begin{figure}[htb!]
    \centering
    \includegraphics[scale=0.5]{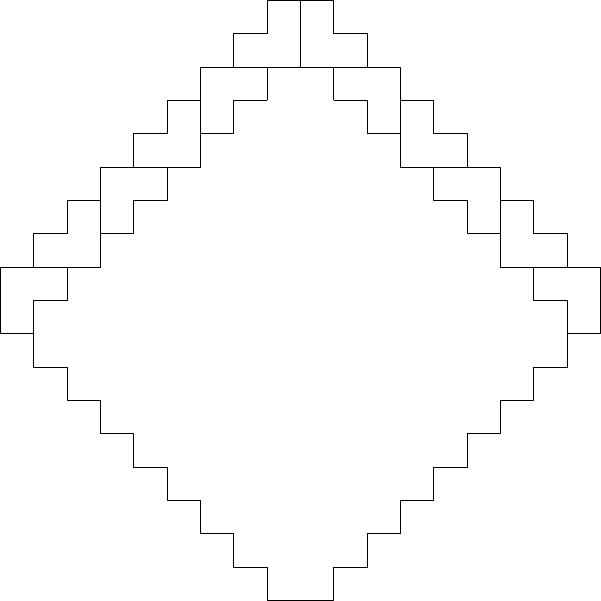}
    \caption{Getting a tiling of $\ad(n)$ from $\ad(n-1)$.}
    \label{fig:enum-2}
\end{figure}

Now let us remove all the subtilings from $\ad(n-1)$. In doing this, we will see that the way we have constructed Figure \ref{fig:enum-2} actually interacts with the construction in Figure \ref{fig:enum-1} to produce $2k$ $2\times 6$ rectangles which can each be tiled in $2$ different ways, for $n=3k$. From here we have the term $4k$ in the bound for different values of $k$. So, we have
\begin{align*}
        T(n)\geq &4\left(T(n-1)-\sum_{l=1}^{k-1}T(3l)-2\sum_{l=1}^{k-1}T(3l-1)\right)\\
    &+4\sum_{l=1}^{k-1}lT(3l)+\sum_{l=1}^{k-1}T(3l-1)(l+T(2)).
\end{align*}
This gives us the bound in \eqref{eq:enum}.
\end{proof}

\begin{remark}
A similar result will also hold for Aztec Rectangles, but with more parameters as well as increased complexity.
\end{remark}

\bibliographystyle{elsarticle-harv}


\end{document}